\documentclass{article}
\usepackage{fullpage}
  \usepackage{verbatim}
  \usepackage{authblk}
  \usepackage{stfloats}
  \usepackage{graphicx}
  \usepackage{bold-extra}
  \usepackage{enumerate}
  \usepackage{amsfonts,amsmath,amsthm}
  \usepackage{tikz}
  \usetikzlibrary{shapes, calc}
  \usepackage{rotating}
  
  \usepackage[algo2e, noend, noline, boxed]{algorithm2e}
  \SetKwComment{tcm}{\{}{\}}
   \newenvironment{myfunction}[2][htbp]
  {%
    \setlength{\algomargin}{.2cm}
    \begin{center}
    \begin{minipage}{#2}
    \begin{function}[#1]
    \small
     \let\Par=\par
       \def\par{\endgraf\vspace{.1cm}}
           \SetKw{To}{to}%
       \SetKw{Downto}{downto}%
           \SetKw{Or}{or}%
       \SetKwFor{Algo}{Function}{}{}%
      \vspace{.15cm}%
   }
   {%
     \let\par=\Par\end{function}%
     \end{minipage}%
     \end{center}%
   }
  \newenvironment{myalgorithm}[2][htbp]
  {%
    \setlength{\algomargin}{.2cm}
    \begin{center}
    \begin{minipage}{#2}
    \begin{algorithm2e}[#1]
    \small
     \let\Par=\par
       \def\par{\endgraf\vspace{.1cm}}
           \SetKw{To}{to}%
       \SetKw{Downto}{downto}%
           \SetKw{Or}{or}%
       \SetKwFor{Algo}{Algorithm}{}{}%
      \vspace{.15cm}%
   }
   {%
     \let\par=\Par\end{algorithm2e}%
     \end{minipage}%
     \end{center}%
   }

  \theoremstyle{theorem}
   \newtheorem{theorem}{Theorem}
   \newtheorem{observation}{Observation}
   \newtheorem{lemma}{Lemma}
  
  \newtheorem{claim}[theorem]{Claim}
  \theoremstyle{definition}
  \newtheorem{example}{Example}

  \newcommand{\ChangeList}{\mathit{ChangeList}}
  \newcommand{\Covered}{\mathit{Covered}}
  \newcommand{\CST}{\mathit{CST}}
  \newcommand{\sCST}{\mathit{sCST}}
  \newcommand{\children}{\mathit{children}}
  \renewcommand{\next}{\mathit{next}}
  \newcommand{\tree}{\mathit{tree}}
  \newcommand{\extend}{\Delta}
  \newcommand{\firstocc}{\mbox{\textit{first}}}
  \newcommand{\lastocc}{\mbox{\textit{last}}}
  \newcommand{\Occ}{\mathit{Occ}}
  \renewcommand{\c}{\mathit{cv}}
  \newcommand{\Oh}{\mathcal{O}}
  
  \newcommand{\Lift}{\mathit{Lift}}
  \newcommand{\LocalCorrect}{\mathit{LocalCorrect}}
  \newcommand{\Dist}{\mathit{Dist}}
  \newcommand{\List}{\mathit{List}}
  \newcommand{\MultiInsert}{\mathit{MultiInsert}}
  \newcommand{\MultiPred}{\mathit{MultiPred}}
  \newcommand{\MultiSucc}{\mathit{MultiSucc}}
  \newcommand{\PP}{\mathcal{P}}
  \newcommand{\union}{\mathit{Union}}
  \newcommand{\find}{\mathit{Find}}  
  \newcommand{\EE}{\mathcal{E}}
  \newcommand{\PC}{{\textsc{PartialCovers}}}
  \newcommand{\APC}{{\textsc{AllPartialCovers}}}
  
  \newcommand{\Leaves}{\mathit{Leaves}}  

  \title{
    Fast Algorithm for Partial Covers in Words\footnote{
      A preliminary version of this work appeared in the Proceedings of
      the Twenty-Fourth Annual Symposium on Combinatorial Pattern Matching, pp. 177--188, 2013.
    }
  }
  
  \date{}
  \author[1]{Tomasz Kociumaka\thanks{
    Supported by Polish budget funds for science in 2013-2017 as a research project under the `Diamond Grant' program.
    }}
\author[1]{Jakub Radoszewski\thanks{The author receives financial support of Foundation for Polish Science.}}
\author[1,2]{Wojciech Rytter\thanks{Supported by grant no.\ N206 566740 of the National Science Centre.}}
\author[3]{Solon P. Pissis\thanks{Supported by the NSF--funded iPlant Collaborative (NSF grant \#DBI-0735191).}}
\author[1]{Tomasz Wale\'n\thanks{Supported by Iuventus Plus grant (IP2011 058671) of the Polish Ministry of Science and Higher Education.}}

\affil[1]{Faculty of Mathematics, Informatics and Mechanics, University of Warsaw, Poland, \texttt{[kociumaka, jrad, rytter, walen]@mimuw.edu.pl}}
\affil[2]{Faculty of Mathematics and Computer Science, Copernicus University, Toru\'n, Poland}
\affil[3]{Department of Informatics, King's College London, UK, \texttt{solon.pissis@h-its.org}}

\begin{document}
\maketitle
  \begin{abstract}
    A factor $u$ of a word $w$ is a \emph{cover} of $w$ if every position in $w$ lies within some occurrence of $u$ in $w$.
    A word $w$ covered by $u$ thus generalizes the
    idea of a \emph{repetition}, that is, a word composed of exact concatenations of $u$.
    In this article we introduce a new notion of
    $\alpha$-\emph{partial cover}, which can be viewed as a relaxed variant
    of cover, that is, a factor covering at least $\alpha$ positions in $w$.
    We develop a data structure of $\Oh(n)$ size (where $n=|w|$) that can be constructed in $\Oh(n\log n)$ time
    which we apply to compute all shortest $\alpha$-partial covers for a given $\alpha$.
    We also employ it for an $\Oh(n\log n)$-time algorithm computing a shortest $\alpha$-partial cover
    for each $\alpha=1,2,\ldots,n$.
  \end{abstract}

    \section{Introduction}
    The notion of periodicity in words and its many variants have been well-studied in numerous
    fields like combinatorics on words, pattern matching, data compression, automata theory,
    formal language theory, and molecular biology (see \cite{DBLP:journals/tcs/CrochemoreIR09}).
    However the classic notion of
    periodicity is too restrictive to provide a description of a word such as
    $\texttt{abaababaaba}$, which is covered by copies of $\texttt{aba}$, yet not exactly periodic.
    To fill this gap, the idea of \emph{quasiperiodicity} was introduced~\cite{Apo93}.
    In a periodic word, the occurrences of the period do not overlap. In contrast,
    the occurrences of a quasiperiod in a quasiperiodic word may overlap.
    Quasiperiodicity thus enables the
    detection of repetitive structures that would be ignored by the classic characterization
    of periods.

    The most well-known formalization of quasiperiodicity is the cover of word.
    A factor $u$ of a word $w$ is said to be
    a \emph{cover} of $w$ if $u\ne w$, and every position in $w$ lies within
    some occurrence of $u$ in $w$.
    Equivalently, we say that $u$ \emph{covers} $w$. Note that a cover of $w$ must also be
    a \emph{border} --- both prefix and suffix --- of $w$.
    Thus, in the above example, $\texttt{aba}$ is the shortest cover of $\texttt{abaababaaba}$.

    A linear-time algorithm for computing the shortest cover of a word was proposed
    by Apostolico et al.~\cite{DBLP:journals/ipl/ApostolicoFI91}, and a linear-time
    algorithm for computing all the covers of a word was proposed by
    Moore \& Smyth~\cite{DBLP:journals/ipl/MooreS94}. Breslauer~\cite{DBLP:journals/ipl/Breslauer92}
    gave an online linear-time algorithm computing the {\em minimal cover array} of a word --- a data 
    structure specifying the shortest cover of every prefix of the word.
    Li \& Smyth~\cite{DBLP:journals/algorithmica/LiS02} provided a linear-time algorithm
    for computing the {\em maximal cover array} of a word, and showed that, analogous to
    the border array~\cite{AlgorithmsOnStrings}, it actually determines the structure of {\em all} the covers of every prefix of the word.

    A known extension of the notion of cover is the notion of \emph{seed}.
    A seed is not necessarily aligned with the ends of the word being covered,
    but is allowed to overflow on either side.
    More formally, a word $u$ is a seed of $w$ if $u$ is a factor of $w$ and $w$ is a factor of some
    word $y$ covered by $u$.
    Seeds were first introduced by Iliopoulos, Moore, and Park~\cite{DBLP:journals/algorithmica/IliopoulosMP96}.
    A linear algorithm for computing the shortest seed of a word was given
    by Kociumaka et al.~\cite{DBLP:conf/soda/KociumakaKRRW12}.

    Still it remains unlikely that an arbitrary word, even over the binary alphabet,
    has a cover (or even a seed).
    For example, $\texttt{abaaababaabaaaababaa}$ is a word that not only has no cover, 
    but whose every prefix also has no cover.    
    In this article we provide a natural form of quasiperiodicity.
    We introduce the notion of \emph{partial covers}, that is, factors covering at least a given number of positions in $w$.
    Recently, Flouri et al.~\cite{Flouri2013102} suggested a related notion of \emph{enhanced covers}
    which are additionally required to be borders of the word.
    
    Partial covers can be viewed as a relaxed variant of covers
    alternative to approximate covers~\cite{SimParkKimLee}.
    The approximate covers require each position to lie within an
    approximate occurrence of the cover. This allows for small irregularities
    within each fragment of a word.
    On the other hand partial covers require exact occurrences but drop
    the condition that all positions need to be covered. This allows some
    fragments to be completely irregular as long as the total length of such
    fragments is small.
    The significant advantage of partial covers is that they enjoy more
    combinatorial properties, and consequently the algorithms solving the most
    natural problems are much more efficient than those concerning approximate
    covers, where the time complexity rarely drops below quadratic and
    some problems are even NP-hard. 
    
    Let $\Covered(u,w)$ denote the number of positions in $w$ covered by occurrences of the word $u$ in $w$; 
    we call this value the \emph{cover index} of $u$ within $w$.
    For example, $\Covered(\texttt{aba},\texttt{aababab})=5$. We primarily
    focus on the following two problems, but the tools we develop can be
    used to answer a number of questions concerning partial covers,
    some of which are discussed in the Conclusions.
    \vskip 0.3cm \noindent {\PC\ \bf problem}

    \noindent \hspace*{0.2cm}
    {\bf Input:} a word $w$ of length $n$ and a positive integer $\alpha\le n$.

    \noindent \hspace*{0.2cm}
    {\bf Output:} all shortest factors $u$ such that $\Covered(u,w)\ge \alpha$.
    \vskip 0.3cm \noindent
    Each factor given in the output is represented by the first and the last starting position
    of its occurrence in $w$.

    \begin{example}
      Let $w=\texttt{bcccacccaccaccb}$ and $\alpha=11$.
      Then the only shortest $\alpha$-partial covers are $\texttt{ccac}$ and $\texttt{cacc}$.
    \end{example}

    \noindent {\APC\ \bf problem}

    \noindent \hspace*{0.2cm}
    {\bf Input:} a word $w$ of length $n$.

    \noindent \hspace*{0.2cm}
    {\bf Output:} for all $\alpha=1,\ldots,n$, a shortest factor $u$
    such that $\Covered(u,w)\ge \alpha$.

    \vskip 0.3cm
    \noindent{\bf Our contribution.} The following summarizes our main result.
    \begin{theorem}\label{thm:main}
      The \PC\ and \APC\ problems can be solved
      in $\Oh(n\log n)$ time and $\Oh(n)$ space.
    \end{theorem}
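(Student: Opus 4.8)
The plan is to reduce both problems to computing, for every length $\ell\in\{1,\dots,n\}$, the value $f(\ell)=\max\{\Covered(u,w):|u|=\ell\}$, and then the non-decreasing prefix maxima $F(\ell)=\max_{\ell'\le\ell}f(\ell')$. The minimal length of a factor whose cover index reaches $\alpha$ is exactly $g(\alpha)=\min\{\ell:F(\ell)\ge\alpha\}$, a non-decreasing step function (the factors with cover index $\ge\alpha+1$ form a subset of those with cover index $\ge\alpha$). Reading off $g(\alpha)$ for every $\alpha$ answers \APC, while \PC\ is obtained by first computing $\ell^\star=g(\alpha)$ and then reporting every factor of length $\ell^\star$ whose cover index reaches $\alpha$. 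For the reporting step I would precompute, for each node of the suffix tree of $w$, the smallest and largest leaf label in its subtree, so that each reported factor is represented by its first and last starting position in $\Oh(1)$ time.

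The combinatorial heart is the cover-index formula. Fix a factor $u$ with sorted starting positions $p_1<\dots<p_k$ and let $d_i=p_{i+1}-p_i$. Since an occurrence of length $\ell$ never runs past the right end of $w$, the rightmost occurrence always contributes $\ell$ fresh positions, so $\Covered(u,w)=\ell+\sum_{i=1}^{k-1}\min(\ell,d_i)$, a non-decreasing, piecewise-linear, concave function of $\ell=|u|$. All factors corresponding to a single edge of the suffix tree share one occurrence set (the leaves below the lower endpoint), so each edge contributes exactly this cover function over the length interval (depth of the parent, depth of the child]. Hence $f$ is the pointwise maximum, over all $\Oh(n)$ edges, of these piecewise-linear functions restricted to their intervals.

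To obtain the occurrence sets and their gap multisets without listing all occurrences at every node, I would build the suffix tree in $\Oh(n)$ time and process it bottom-up with small-to-large merging: the starting-position set at a node is assembled by inserting the elements of the smaller child subtrees into the largest one, for $\Oh(n\log n)$ insertions in total. Each insertion splits one gap into two, so I can keep the aggregates $\sum_{d\le\ell}d$ and $\lvert\{d>\ell\}\rvert$ in a balanced structure and thereby evaluate the cover index at any queried length in $\Oh(\log n)$ time, updating it by a constant number of $\min(\ell,\cdot)$ terms per insertion.

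The main obstacle is the aggregation into $f$: each edge contributes a piecewise-linear function whose breakpoints are the gap values inside its interval, so summed over all edges there are more than $\Oh(n)$ pieces, and I can afford neither a full upper-envelope computation over all of them nor storing them within $\Oh(n)$ space. The resolution I would pursue is a sweep of $\ell$ from $n$ down to $1$ coupled with a union-find structure on the leaves: cutting the suffix tree at depth $\ell$ groups the leaves into clusters (one factor of length $\ell$ per cluster), clusters merge precisely when $\ell$ drops below an internal node's depth, and within a cluster the cover value moves linearly in $\ell$ except when $\ell$ crosses one of its gap values. Maintaining the running maximum of the cluster cover values under these linear motions, merges, and slope changes is a kinetic-maximum problem; the crux of the argument — and the role of the $\Oh(n)$-size data structure announced in the statement — is to show that the number of updates to this maximum is $\Oh(n\log n)$ (charged to the $\Oh(n\log n)$ small-to-large insertions and gap-crossing events) and that they can be serviced online in amortized $\Oh(\log n)$ time and $\Oh(n)$ space. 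Once $f$, and hence $F$ and $g$, are available, both problems follow in the stated bounds.
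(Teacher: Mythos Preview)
Your overall framework---reducing both problems to the function $f(\ell)=\max_{|u|=\ell}\Covered(u,w)$, expressing $\Covered(u,w)=\ell+\sum_i\min(\ell,d_i)$ as a piecewise-linear function of $\ell$ on each suffix-tree edge, and assembling the occurrence sets by small-to-large merging---matches the paper's architecture closely. The paper's construction of $\CST(w)$ is exactly your bottom-up merge with change lists, and its solution to \APC\ is exactly the upper envelope of the resulting line segments.

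The genuine gap is your treatment of the ``main obstacle.'' You assert that the piecewise-linear functions on the suffix-tree edges have, in total, more than $\Oh(n)$ pieces, and then try to circumvent this with a kinetic-maximum sweep. But the premise is false, and recognising this is precisely the paper's key structural lemma. A breakpoint at depth $d$ on the edge to $v$ occurs only when some gap in $\Occ(v,w)$ equals $d$; by your own formula this means $\hat v[1\dd d]$ is primitive and $\hat v[1\dd d]^2$ occurs in $w$. Distinct (edge, breakpoint) pairs therefore correspond to distinct factors of $w$ that are halves of primitively rooted squares, and there are only $\Oh(n)$ of those. The paper exploits this by inserting these $\Oh(n)$ ``extra nodes'' into the suffix tree; on every edge of the augmented tree the cover index is then an \emph{exact} arithmetic progression (Lemma~\ref{lem:formula}), so one has $\Oh(n)$ honest line segments and can take their upper envelope in $\Oh(n\log n)$ time and $\Oh(n)$ space.

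Your kinetic-maximum alternative is both unnecessary and, as stated, insufficient: you do not prove that the running maximum changes only $\Oh(n\log n)$ times (kinetic maxima can change quadratically often without structural constraints), and even granting that bound, servicing each update in amortised $\Oh(\log n)$ time yields $\Oh(n\log^2 n)$, not the claimed $\Oh(n\log n)$. The reporting step for \PC\ is also underspecified, since your gap-aggregate structure is destroyed by the merges and cannot be queried after the sweep; the paper instead stores $\c(v),\Delta(v)$ at every explicit node of the augmented tree, which makes the cover index of \emph{any} factor available in $\Oh(1)$ once its locus is known.
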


    We extensively use suffix trees, for an exposition see \cite{AlgorithmsOnStrings,Jewels}.
    A suffix tree of a word is a compact trie of its suffixes, the nodes of the trie which become
    nodes of the suffix tree are called {\em explicit} nodes, while the other
    nodes are called {\em implicit}. 
    Each edge of the suffix tree can be viewed as an {\em upward}
    maximal path of implicit nodes starting with an explicit node.
    Moreover, each node belongs to a unique path of that kind.
    Then, each node of the trie can be represented in the suffix tree
    by the edge it belongs to and an index within the corresponding path.
    Each factor of the word corresponds to an explicit or implicit
    node of the suffix tree.
    A representation of this node is called the \emph{locus} of the factor.
    Our algorithm finds the loci of the shortest partial covers, it is then straightforward
    to locate an occurrence for each of them.

    \paragraph{\bf A Sketch of the Algorithm}
    The algorithm first augments the suffix tree of $w$, that is, a linear number of implicit
    extra nodes become explicit.
    Then, each node of the augmented tree is annotated with two integer values.
    They allow for determining the size of the covered area 
    for each implicit node by a simple formula, since limited to a single edge of the augmented suffix tree,
    these values form an arithmetic progression.
    This yields a solution to the \PC.
    For an efficient solution to the \APC\ problem, we additionally
    find the upper envelope of a number of line segments constructed from the arithmetic progressions.
    
    \paragraph{\bf Structure of the Paper}
    In Section~\ref{sec:CST} we formally introduce the augmented and annotated suffix tree
    that we call \emph{Cover Suffix Tree}.
    We show its basic properties and present its application for \PC\
    and \APC\ problems.
    Section~\ref{sec:CST_construction} is dedicated to the construction
    of the Cover Suffix Tree.
    Before that, Section~\ref{sec:fu} presents an auxiliary data structure being an extension
    of the classical Union/Find data structure; its implementation is given later, in Section~\ref{sec:details}.
    Additional applications of the Cover Suffix Tree are given in Sections~\ref{sec:by-products}
    and~\ref{sec:conclusions}.
    The former presents how the data structure can be used to compute all primitively rooted squares
    in a word and a linear-sized representation of all the seeds in a word.
    The latter contains a short discussion of variants of the \PC\ problem
    that can be solved in a similar way.

  \section{Augmented and Annotated Suffix Trees}\label{sec:CST}
  Let $w$ be a word of length $n$ over a totally ordered alphabet $\Sigma$.
  The suffix tree $T$ of $w$ can be constructed in $\Oh(n\log{|\Sigma|})$ time
  \cite{DBLP:conf/focs/Farach97,DBLP:journals/algorithmica/Ukkonen95}.
  For an explicit or implicit node $v$ of $T$, we denote by $\hat{v}$ the word obtained by spelling
  the characters on a path from the root to $v$.
  We also denote $|v|=|\hat{v}|$.
  As in most applications of the suffix tree,
  the leaves of $T$ play an auxiliary role and do not correspond to factors
  (actually they are suffixes of $w\#$, where $\# \notin \Sigma$).
  They are labeled with the starting positions of the suffixes of $w$.

  We introduce the \emph{Cover Suffix Tree} of $w$, denoted by $\CST(w)$,
  as an \emph{augmented} --- new nodes are added --- suffix tree in which the
  nodes are \emph{annotated} with information relevant to covers. $\CST(w)$ is similar to the data structure named \emph{Minimal Augmented Suffix Tree}
  (see \cite{DBLP:journals/algorithmica/ApostolicoP96,DBLP:conf/icalp/BrodalLOP02}).

  For a set $X$ of integers and $x\in X$, we define
  $$\next_X(x)=\min\{y\in X, y>x\},$$
  and we assume $\next_X(x)=\infty$ if $x=\max X$.
  By $\Occ(v,w)$ we denote the set of starting positions of occurrences of $\hat{v}$ in $w$.
  For any $i \in \Occ(v,w)$, we define:
  $$\delta(i,v) = \next_{\Occ(v,w)}(i)-i.$$
  Note that $\delta(i,v)=\infty$ if $i$ is the last occurrence of $\hat{v}$.
  Additionally, we define:
  $$\c(v)\,=\, \Covered(\hat{v},w),\ \ \Delta(v)\,=\,  \big|\left\{i\in
  \Occ(v,w)\, :\, \delta(i,v)\ge |v|\right\}\big|;$$ see, for example,
  Fig.~\ref{fig:c_Delta}.

    \begin{figure}[htb]
      \centering
      \begin{tikzpicture}
\def\unitX{3mm}
\def\unitY{3mm}

\foreach \x/\c in {1/b,2/c,3/c,4/c,5/a,6/c,7/c,8/c,9/a,10/c,11/c,12/a,13/c,14/c,15/b} {
    \draw node [above] at (\x*\unitX, 0) {\tt \c}; 
    \draw node [below] at (\x*\unitX, 0) {\tiny \tt \x}; 
}

\foreach \x/\y/\label in {4/1/,8/1/,11/2/} {
    \draw (\x*\unitX-0.3*\unitX,\y*\unitY*0.8+0.4*\unitY) -- (\x*\unitX-0.3*\unitX,\y*\unitY*0.8+0.7*\unitY)--+(3.6*\unitX,0)--+(3.6*\unitX,-0.3*\unitY);
}
\end{tikzpicture}
      \caption{\label{fig:c_Delta}
        Let $w=\texttt{bcccacccaccaccb}$ and let $v$ be the node corresponding to $\texttt{cacc}$.
        We have $\Occ(v,w)=\{4, 8, 11\}$, $\c(v) = 11$, $\Delta(v)=2$.
      }
    \end{figure}

  A word $u$ is called \emph{primitive} if $u=y^k$ for a word $y$ and an integer $k$ implies that $y=u$, and
  non-primitive otherwise.
  A square $u^2$ is called \emph{primitively rooted} if $u$ is primitive.

  \begin{observation}\label{obs:primitively_rooted_square}
    Let $v$ be a node in the suffix trie of $w$.
    Then $\hat{v}\hat{v}$ is a primitively rooted square in $w$
    if and only if there exists $i \in \Occ(v,w)$ such that $\delta(i,v)=|v|$.
  \end{observation}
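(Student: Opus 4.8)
The plan is to reduce the biconditional to a single classical fact about primitive words: a word $u$ is primitive if and only if $u$ has no occurrence at an \emph{internal} position inside $uu$, that is, no occurrence starting at an offset strictly between $0$ and $|u|$ (equivalently, $uu$ contains exactly two occurrences of $u$, at offsets $0$ and $|u|$). Writing $u=\hat{v}$ and $m=|v|$, I will translate the condition $\delta(i,v)=|v|$ into the statement ``$u$ occurs at two positions exactly $m$ apart with no occurrence in between'' and match it against this characterization.

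For the direction ($\Leftarrow$), suppose $\delta(i,v)=m$ for some $i\in\Occ(v,w)$. By definition $\next_{\Occ(v,w)}(i)=i+m$, so $u$ occurs at both $i$ and $i+m$ and at no position strictly in between. Concatenating the two occurrences gives $w[i\dd i+2m-1]=uu$, so $uu$ occurs in $w$. Any occurrence of $u$ at an internal offset $t$ with $0<t<m$ inside this block would be a genuine occurrence of $u$ in $w$ starting at $i+t\in(i,i+m)$, contradicting the absence of occurrences strictly between $i$ and $i+m$. Hence $uu$ has no internal occurrence of $u$, so by the characterization $u$ is primitive and $uu$ is a primitively rooted square.

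For the direction ($\Rightarrow$), suppose $uu=\hat{v}\hat{v}$ is a primitively rooted square occurring at some position $j$. Then $u$ occurs at $j$ and $j+m$, and by primitivity $uu$ has no internal occurrence of $u$; in particular no occurrence of $u$ in $w$ starts strictly between $j$ and $j+m$, since such an occurrence would lie within the block $w[j\dd j+2m-1]$ and hence be internal. Therefore $\next_{\Occ(v,w)}(j)=j+m$ and $\delta(j,v)=m=|v|$, so $i=j$ witnesses the claim.

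The technical heart, and the only nontrivial step, is the characterization itself, specifically the implication that an internal occurrence forces non-primitivity. I would prove it by letting $t$ be the internal offset, factoring $u=ab$ with $|a|=t$ and $0<t<m$, and observing that inside $uu=abab$ the occurrence of $u$ at offset $t$ forces $ba=ab$; the commutation lemma (two nonempty words commute iff they are powers of a common word) then gives $a,b\in\rho^{*}$ for some word $\rho$, whence $u=\rho^{k}$ with $k\ge 2$, so $u$ is non-primitive. The reverse implication is immediate, since $u=\rho^{k}$ already exhibits an occurrence of $u$ at offset $|\rho|\in(0,m)$. As this fact is entirely standard in combinatorics on words, in the final write-up I expect to cite it rather than reprove it.
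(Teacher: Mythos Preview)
Your proof is correct and follows essentially the same approach as the paper: both reduce the claim to the synchronization property of primitive words (a word $u$ is primitive iff it has no internal occurrence in $uu$), and then translate $\delta(i,v)=|v|$ into ``two occurrences exactly $|v|$ apart with none in between.'' The only difference is cosmetic: the paper cites the synchronization property and argues the $(\Leftarrow)$ primitivity step contrapositively, whereas you spell out the characterization via the commutation lemma and argue directly.
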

  \begin{proof}
    Recall that, by the synchronization property of primitive words (see~\cite{AlgorithmsOnStrings}),
    $\hat{v}$ is primitive if and only if it occurs exactly twice in $\hat{v}\hat{v}$.

    $(\Rightarrow)$
    If $\hat{v}\hat{v}$ occurs in $w$ at position $i$ then $\delta(i,v)=|v|$.

    $(\Leftarrow)$ If $\delta(i,v)=|v|$ then obviously $\hat{v}\hat{v}$ occurs in $w$ at position $i$.
    Additionally, if $\hat{v}$ was not primitive then $\delta(i,v)<|v|$ would hold.
  \end{proof}

  In $\CST(w)$, we introduce additional explicit nodes called \emph{extra nodes}, 
  which correspond to halves of primitively rooted square factors of $w$.
  Moreover we annotate all explicit nodes (including extra nodes) with the values
  $\c,\Delta$; see, for example, Fig.~\ref{fig:CST_example}.
  The number of extra nodes is bounded by the number of distinct squares, which is linear \cite{DBLP:journals/jct/FraenkelS98},
  so $\CST(w)$ takes $\Oh(n)$ space.
 
  \begin{lemma}\label{lem:formula}
    Let $v_1,v_2,\ldots,v_k$ be the consecutive implicit nodes on the edge from an explicit node $v$ of $\CST(w)$
    to its explicit parent.
    Then for $1\le i \le k$ we have
    $$\c(v_i) = \c(v)-i\Delta(v),$$
    in particular $(\c(v_i))_{i=1}^k$ forms an arithmetic progression.
  \end{lemma}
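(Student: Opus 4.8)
The plan is to regard the cover index as a function of length along this single edge and to show it decreases by exactly $\Delta(v)$ at every step. First I would record the structural fact underlying everything: since $v_1,\dots,v_k$ are the implicit nodes strictly between $v$ and its explicit parent, no branching occurs among them, so all of $v,v_1,\dots,v_k$ share one occurrence set, $\Occ(v_i,w)=\Occ(v,w)$. Writing $\Occ(v,w)=\{p_1<\dots<p_m\}$ and $\delta_j=\delta(p_j,v)$ (with $\delta_m=\infty$), the gaps $\delta_j$ are therefore common to every node on the edge, and the only quantity that varies from node to node is the length $|v_i|=|v|-i$. This reduces the whole statement to understanding how the covered area depends on length alone.

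Second, I would establish a closed form for the covered area in terms of these gaps, namely $\Covered(\hat{u},w)=\sum_{j=1}^{m}\min(|u|,\delta_j)$ for any node $u$ on the edge. The clean way to see this is a charging argument: assign each covered position $q$ to the rightmost occurrence $p_j\le q$ that still covers it; one checks that the positions charged to $p_j$ are exactly $[p_j,\min(p_j+|u|-1,\,p_{j+1}-1)]$, of cardinality $\min(|u|,\delta_j)$, with the last occurrence contributing the full $|u|=\min(|u|,\infty)$. (This matches Fig.~\ref{fig:c_Delta}, where $4+3+4=11$.) From the formula the one-step difference is immediate: since $|v_{i-1}|=|v_i|+1$,
$$\c(v_{i-1})-\c(v_i)=\sum_{j=1}^{m}\big(\min(|v_i|+1,\delta_j)-\min(|v_i|,\delta_j)\big)=\big|\{j:\delta_j\ge |v_i|+1\}\big|.$$

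Third comes the crux. I want each of these one-step differences, for $i=1,\dots,k$, to equal $\Delta(v)=|\{j:\delta_j\ge|v|\}|$. As $|v_i|+1\le|v|$, the set $\{j:\delta_j\ge|v|\}$ is contained in $\{j:\delta_j\ge|v_i|+1\}$, and the two coincide exactly when no $\delta_j$ lies in $\{|v|-i+1,\dots,|v|-1\}$; over all steps this amounts to requiring that no $\delta_j$ equal any intermediate length $|v_1|,\dots,|v_{k-1}|$ (note $|v_k|$ need not be excluded). This is precisely where Observation~\ref{obs:primitively_rooted_square} and the definition of extra nodes enter: if some occurrence had $\delta_j=|v_{i'}|$ with $1\le i'\le k-1$, then $\hat{v_{i'}}\hat{v_{i'}}$ would be a primitively rooted square, so $v_{i'}$ would be an extra node and hence explicit, contradicting that $v_1,\dots,v_{k-1}$ are implicit. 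Therefore every one-step difference equals $\Delta(v)$, and telescoping yields $\c(v_i)=\c(v)-i\Delta(v)$, an arithmetic progression.

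I expect the only real content to be the covering formula together with pinning down exactly which $\delta_j$-values are forbidden; the main obstacle is thus the bookkeeping — verifying the charging argument and identifying the precise set $\{|v_1|,\dots,|v_{k-1}|\}$ of dangerous lengths — while the appeal to Observation~\ref{obs:primitively_rooted_square} is the conceptual heart that forces the slope to stay constant and makes the progression arithmetic.
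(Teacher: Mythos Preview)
Your argument is correct and follows the same line as the paper's proof: both use that $\Occ(v_i,w)=\Occ(v,w)$ and then invoke Observation~\ref{obs:primitively_rooted_square} to show that the overlap pattern of consecutive occurrences cannot change at any implicit node, so the covered area drops by exactly $\Delta(v)$ at each step. The only difference is presentational: you make the mechanism explicit via the identity $\Covered(\hat u,w)=\sum_j\min(|u|,\delta_j)$ and a telescoping computation, whereas the paper states directly that overlapping occurrences of $\hat v$ remain overlapping for every $\hat{v_i}$ (else an extra node would appear), which is the same content without the closed form.
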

  \begin{proof}
    Note that $\Occ(v_i,w) = \Occ(v,w)$, since otherwise $v_i$ would be an explicit node of $\CST(w)$.
    Also note that if any two occurrences of $\hat{v}$ in $w$ overlap, then the corresponding
    occurrences of $\hat{v_i}$ overlap.
    Otherwise, by Observation~\ref{obs:primitively_rooted_square},
    the path from $v$ to $v_i$ (excluding $v$) would contain an extra node.
    Hence, when we go up from $v$ (before reaching its parent) the size of the covered area
    decreases at each step by $\Delta(v)$.
  \end{proof}
   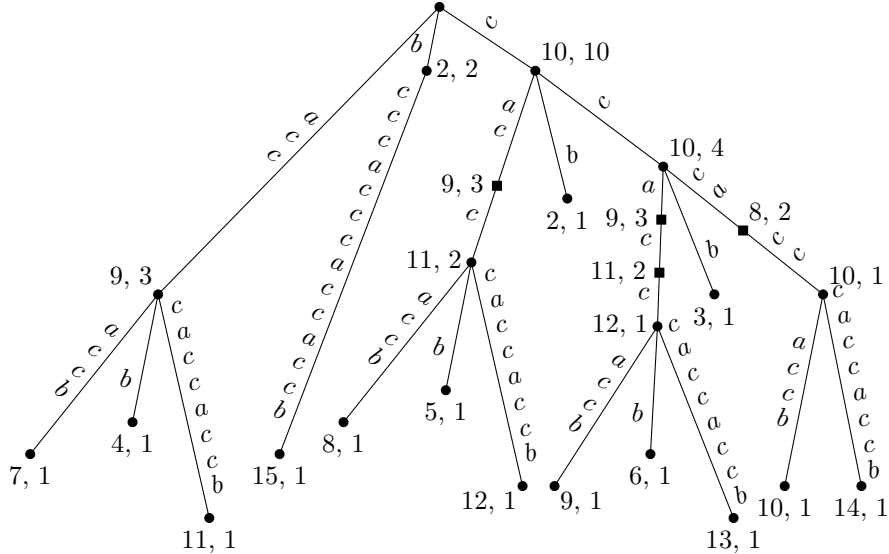
\begin{figure}[htb]
      \centering
        \begin{tikzpicture}[scale=.85]
      \filldraw (0,-.5) circle (0.07cm);

      \draw (0,-.5) -- node[above=0.15cm,sloped] {\begin{rotate}{270}$c$\end{rotate}\ \ \ \begin{rotate}{270}$c$\end{rotate}\ \ \ \begin{rotate}{270}$a$\end{rotate}} (-4.4,-5);
    \begin{scope}[yshift=-2cm,xshift=-1cm]
      \filldraw (-3.4,-3) node[above left=-0.05cm] {9, 3} circle (0.07cm);
      \filldraw (-5.4,-5.5) node[below=0.05cm] {7, 1} circle (0.07cm);
      \filldraw (-3.8,-5) node[below=0.05cm] {4, 1} circle (0.07cm);
      \filldraw (-2.6,-6.5) node[below=0.05cm] {11, 1} circle (0.07cm);
      \draw (-3.4,-3) -- node[above=0.15cm,sloped] {\begin{rotate}{270}$b$\end{rotate}\ \ \ \begin{rotate}{270}$c$\end{rotate}\ \ \ \begin{rotate}{270}$c$\end{rotate}\ \ \ \begin{rotate}{270}$a$\end{rotate}} (-5.4,-5.5);
      \draw (-3.4,-3) -- node[above=0.15cm,sloped,near end] {\begin{rotate}{270}${b}$\end{rotate}} (-3.8,-5);
      \draw (-3.4,-3) -- node[above,sloped] {\begin{rotate}{90}$c$\end{rotate}\ \ \ \begin{rotate}{90}$a$\end{rotate}\ \ \ \begin{rotate}{90}$c$\end{rotate}\ \ \ \begin{rotate}{90}$c$\end{rotate}\ \ \ \begin{rotate}{90}$a$\end{rotate}\ \ \ \begin{rotate}{90}$c$\end{rotate}\ \ \ \begin{rotate}{90}$c$\end{rotate}\ \ \ \begin{rotate}{90}$b$\end{rotate}} (-2.6,-6.5);
    \end{scope}

    \draw (0,-.5) -- node[above=0.15cm,sloped,near end] {\begin{rotate}{270}$b$\end{rotate}} (-0.2,-1.5);
    \draw (-0.2,-1.5) -- node[above=0.15cm,sloped] {\begin{rotate}{270}$b$\end{rotate}\ \ \ \begin{rotate}{270}$c$\end{rotate}\ \ \ \begin{rotate}{270}$c$\end{rotate}\ \ \ \begin{rotate}{270}$a$\end{rotate}\ \ \ \begin{rotate}{270}$c$\end{rotate}\ \ \ \begin{rotate}{270}$c$\end{rotate}\ \ \ \begin{rotate}{270}$a$\end{rotate}\ \ \ \begin{rotate}{270}$c$\end{rotate}\ \ \ \begin{rotate}{270}$c$\end{rotate}\ \ \ \begin{rotate}{270}$c$\end{rotate}\ \ \ \begin{rotate}{270}$a$\end{rotate}\ \ \ \begin{rotate}{270}$c$\end{rotate}\ \ \ \begin{rotate}{270}$c$\end{rotate}\ \ \ \begin{rotate}{270}$c$\end{rotate}} (-2.5,-7.5);
    \filldraw (-0.2,-1.5) node[right] {2, 2} circle (0.07cm);
    \filldraw (-2.5,-7.5) node[below=0.05cm] {15, 1} circle (0.07cm);

    \draw (0,-.5) -- node[above,sloped] {\begin{rotate}{90}$c$\end{rotate}} (1.5,-1.5);
    \begin{scope}[xshift=1.5cm,yshift=-1.5cm]
      \filldraw (0,0) node[above right=-0.05cm] {10, 10} circle (0.07cm);
      \filldraw (-1,-3) node[left] {11, 2} circle (0.07cm);
      \filldraw (-3,-5.5) node[below=0.05cm] {8, 1} circle (0.07cm);
      \filldraw (-1.4,-5) node[below=0.05cm] {5, 1} circle (0.07cm);
      \filldraw (-0.2,-6.5) node[below left=-0.05cm] {12, 1} circle (0.07cm);
      \draw (0,0) -- node[above=0.15cm,sloped] {\begin{rotate}{270}$c$\end{rotate}\ \ \ \begin{rotate}{270}$a$\end{rotate}} (-0.6,-1.8);
      \filldraw (-0.67,-1.87) rectangle (-0.53,-1.73);
      \draw (-0.6,-1.8) node[left=0.05cm] {9, 3};
      \draw (-0.6,-1.8) -- node[above=0.15cm,sloped] {\begin{rotate}{270}$c$\end{rotate}} (-1,-3);
      \draw (-1,-3) -- node[above=0.15cm,sloped] {\begin{rotate}{270}$b$\end{rotate}\ \ \ \begin{rotate}{270}$c$\end{rotate}\ \ \ \begin{rotate}{270}$c$\end{rotate}\ \ \ \begin{rotate}{270}$a$\end{rotate}} (-3,-5.5);
      \draw (-1,-3) -- node[above=0.15cm,sloped,near end] {\begin{rotate}{270}$b$\end{rotate}} (-1.4,-5);
      \draw (-1,-3) -- node[above,sloped] {\begin{rotate}{90}$c$\end{rotate}\ \ \ \begin{rotate}{90}$a$\end{rotate}\ \ \ \begin{rotate}{90}$c$\end{rotate}\ \ \ \begin{rotate}{90}$c$\end{rotate}\ \ \ \begin{rotate}{90}$a$\end{rotate}\ \ \ \begin{rotate}{90}$c$\end{rotate}\ \ \ \begin{rotate}{90}$c$\end{rotate}\ \ \ \begin{rotate}{90}$b$\end{rotate}} (-0.2,-6.5);
    \end{scope}
    \draw (1.5,-1.5) -- node[above,sloped,near end] {\begin{rotate}{90}$b$\end{rotate}} (2,-3.5);
    \draw (1.5,-1.5) -- node[above,sloped] {\begin{rotate}{90}$c$\end{rotate}} (3.5,-3);
    \filldraw (2,-3.5) node[below=0.05cm] {2, 1} circle (0.07cm);

      \begin{scope}[xshift=3.5cm,yshift=-3cm]
        \filldraw (0,0) node[above right=-0.05cm] {10, 4} circle (0.07cm);
        \filldraw (-0.1,-0.9) rectangle (0.04,-0.76);
        \draw (-0.03,-0.83) node[left=0.05cm] {9, 3};
        \filldraw (-0.13,-1.73) rectangle (0.01,-1.59);
        \draw (-0.06,-1.66) node[left=0.05cm] {11, 2};
        \filldraw (-0.09,-2.5) node[left] {12, 1} circle (0.07cm);
        \filldraw (-1.7,-5) node[below right=-0.05cm] {9, 1} circle (0.07cm);
        \filldraw (-0.2,-4.5) node[below=0.05cm] {6, 1} circle (0.07cm);
        \filldraw (1.1,-5.5) node[below=0.05cm] {13, 1} circle (0.07cm);
        \draw (0,0) -- node[above=0.15cm,sloped] {\begin{rotate}{270}$a$\end{rotate}} (-0.03,-0.83);
        \draw (-0.03,-0.83) -- node[above=0.15cm,sloped] {\begin{rotate}{270}$c$\end{rotate}} (-0.06,-1.66);
        \draw (-0.06,-1.66) -- node[above=0.15cm,sloped] {\begin{rotate}{270}$c$\end{rotate}} (-0.09,-2.5);
        \draw (-0.09,-2.5) -- node[above=0.15cm,sloped] {\begin{rotate}{270}$b$\end{rotate}\ \ \ \begin{rotate}{270}$c$\end{rotate}\ \ \ \begin{rotate}{270}$c$\end{rotate}\ \ \ \begin{rotate}{270}$a$\end{rotate}} (-1.7,-5);
        \draw (-0.09,-2.5) -- node[above=0.15cm,sloped,near end] {\begin{rotate}{270}$b$\end{rotate}} (-0.2,-4.5);
        \draw (-0.09,-2.5) -- node[above,sloped] {\begin{rotate}{90}$c$\end{rotate}\ \ \ \begin{rotate}{90}$a$\end{rotate}\ \ \ \begin{rotate}{90}$c$\end{rotate}\ \ \ \begin{rotate}{90}$c$\end{rotate}\ \ \ \begin{rotate}{90}$a$\end{rotate}\ \ \ \begin{rotate}{90}$c$\end{rotate}\ \ \ \begin{rotate}{90}$c$\end{rotate}\ \ \ \begin{rotate}{90}$b$\end{rotate}} (1.1,-5.5);
      \end{scope}
      
      \draw (3.5,-3) -- node[above,sloped,near end] {\begin{rotate}{90}$b$\end{rotate}} (4.3,-5);
      \filldraw (4.3,-5) node[below=0.05cm] {3, 1} circle (0.07cm);
      \draw (3.5,-3) -- node[above,sloped] {\begin{rotate}{90}$c$\end{rotate} \ \ \begin{rotate}{90}$a$\end{rotate}} (4.75,-4);
      \filldraw (4.68,-4.07) rectangle (4.82,-3.93);
      \draw (4.75,-4) node[above right=-0.05cm] {8, 2};
      \draw (4.75,-4) -- node[above,sloped] {\begin{rotate}{90}$c$\end{rotate} \ \ \begin{rotate}{90}$c$\end{rotate}} (6,-5);
      \filldraw (6,-5) node[above right=-0.05cm] {10, 1} circle (0.07cm);
      \draw (6,-5) -- node[above=0.15cm,sloped] {\begin{rotate}{270}$b$\end{rotate}\ \ \ \begin{rotate}{270}$c$\end{rotate}\ \ \ \begin{rotate}{270}$c$\end{rotate}\ \ \ \begin{rotate}{270}$a$\end{rotate}} (5.4,-8);
      \draw (6,-5) -- node[above,sloped] {\begin{rotate}{90}$c$\end{rotate}\ \ \ \begin{rotate}{90}$a$\end{rotate}\ \ \ \begin{rotate}{90}$c$\end{rotate}\ \ \ \begin{rotate}{90}$c$\end{rotate}\ \ \ \begin{rotate}{90}$a$\end{rotate}\ \ \ \begin{rotate}{90}$c$\end{rotate}\ \ \ \begin{rotate}{90}$c$\end{rotate}\ \ \ \begin{rotate}{90}$b$\end{rotate}} (6.6,-8);
      \filldraw (5.4,-8) node[below=0.05cm] {10, 1} circle (0.07cm);
      \filldraw (6.6,-8) node[below=0.05cm] {14, 1} circle (0.07cm);

  \end{tikzpicture}
      \caption{
        $\CST(w)$ for $w=\texttt{bcccacccaccaccb}$.
        It contains four extra nodes that are denoted by squares in the figure.
        Each node is annotated with $\c(v), \Delta(v)$. Leaves are omitted for clarity.
      }\label{fig:CST_example}
    \end{figure}
  \begin{example}
    Consider the word $w$ from Fig.~\ref{fig:CST_example}.
    The word $\texttt{cccacc}$ corresponds to an explicit node of $\CST(w)$; we denote it by $v$.
    We have $\c(v)=10$ and $\Delta(v)=1$ since the two occurrences of the factor $\texttt{cccacc}$ 
    in $w$ overlap.
    The word $\texttt{cccac}$ corresponds to an implicit node $v'$ and $\c(v') = 10 - 1 = 9$.
    Now the word $\texttt{ccca}$ corresponds to an extra node $v''$ of $\CST(w)$.
    Its occurrences are adjacent in $w$ and $\c(v'')=8$, $\Delta(v'')=2$.
    The word $\texttt{ccc}$ corresponds to an implicit node $v'''$ and $\c(v''') = 8 - 2 = 6$.
  \end{example}

  \noindent
  As a consequence of Lemma~\ref{lem:formula} we obtain the following result.
  Recall that the locus of a factor $v$ of $w$, given by its start and end position
  in $w$, can be found in $\Oh(\log\log |v|)$ time \cite{DBLP:conf/cpm/KucherovNS12}.
  \begin{lemma}\label{lem:if_CST}
    Assume we are given $\CST(w)$.
    Then we can compute:
    \begin{enumerate}[(1)]
      \item\label{one} for any $\alpha$, the loci of the shortest $\alpha$-partial covers in linear time;
      \item\label{two} given the locus of a factor $u$ in the suffix tree
      $\CST(w)$, the cover index $\Covered(u,w)$ in $\Oh(1)$ time.
    \end{enumerate}
  \end{lemma}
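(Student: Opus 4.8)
The plan is to reduce both parts to the arithmetic-progression structure established in Lemma~\ref{lem:formula}, which lets us read off the cover index of any node---explicit or implicit---directly from the annotations at the bottom explicit endpoint of its edge.

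I would first dispatch part~(\ref{two}). A locus is given as an edge of $\CST(w)$ together with an index along its implicit path; let $v$ be the explicit node at the bottom of that edge. If $u$ is itself explicit, then $\Covered(u,w)=\c(u)$ is stored at the node, so nothing needs to be computed. Otherwise $u=\hat{v_i}$ is the $i$-th implicit node on the edge from $v$ up to its parent, and Lemma~\ref{lem:formula} gives $\Covered(u,w)=\c(v)-i\,\Delta(v)$. This is a single arithmetic evaluation using the two stored values $\c(v),\Delta(v)$ and the index $i$, hence $\Oh(1)$ time.

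For part~(\ref{one}) I would process the edges of $\CST(w)$ one at a time. Fix an edge from an explicit node $v$ to its explicit parent, with implicit nodes $v_1,\dots,v_k$, so that $|v_i|=|v|-i$. Walking up the edge decreases the length by one per step and decreases the cover index by $\Delta(v)$ per step; since the last occurrence of $\hat{v}$ always satisfies $\delta(i,v)=\infty\ge|v|$, we have $\Delta(v)\ge 1$, so the cover index is \emph{strictly} decreasing along the edge. Consequently, the shortest node on this edge whose cover index is at least $\alpha$ is obtained by going up as far as possible, namely at step $i^\star=\min\bigl(k,\lfloor(\c(v)-\alpha)/\Delta(v)\rfloor\bigr)$, provided $\c(v)\ge\alpha$; if $\c(v)<\alpha$ then no node on the edge qualifies and the edge contributes nothing. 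This candidate has length $|v|-i^\star$, and its locus is immediate from the edge and the index $i^\star$. I would keep a running minimum over all edges of the candidate lengths together with the list of loci attaining that minimum, and after scanning every edge output exactly those loci.

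Correctness hinges on the observation that every shortest $\alpha$-partial cover must be the shallowest qualifying node on its own edge: any deeper qualifying node on the same edge is strictly longer yet still has cover index at least $\alpha$, so it cannot be globally shortest. Since every factor of $w$ corresponds to some explicit or implicit node of $\CST(w)$, scanning all edges and retaining the per-edge shallow candidates captures all shortest $\alpha$-partial covers, with no candidate produced twice because distinct nodes lie on distinct edges. The loop runs over $\Oh(n)$ edges with $\Oh(1)$ work each, giving linear time. I do not expect a genuine obstacle: the only points requiring care are the boundary handling of the floor/min expression (including the empty-edge case $\c(v)<\alpha$) and the remark that $\Delta(v)\ge 1$ makes the cover index strictly monotone along each edge, so that ``go up as far as possible'' is well defined and the division in $i^\star$ is never by zero.
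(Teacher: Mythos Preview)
Your proof is correct and follows essentially the same approach as the paper: both dispatch part~(\ref{two}) as an immediate consequence of Lemma~\ref{lem:formula}, and for part~(\ref{one}) both solve, on each edge, the linear inequality $\c(v)-\Delta(v)\cdot(|v|-j)\ge\alpha$ in constant time and take the minimum over edges. Your version is simply more explicit, spelling out the formula for $i^\star$, the monotonicity via $\Delta(v)\ge 1$, and the per-edge-candidate correctness argument that the paper leaves implicit.
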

  \begin{proof}
    Part (\ref{two}) is a direct consequence of Lemma~\ref{lem:formula}.
    As for part (\ref{one}), for each edge of $\CST(w)$, leading from $v$ to its parent $v'$,
    we need to find minimum $|v| \ge j > |v'|$ for which
    $\c(v)-\Delta(v) \cdot (|v|-j) \ge \alpha$.
    Such a linear inequality can be solved in constant time.
  \end{proof}
  Due to this fact the efficiency of the \PC\ problem
  relies on the complexity of $\CST(w)$ construction.
  In turn, the following lemma, also a consequence of Lemma~\ref{lem:formula},
  can be used to solve \APC\ problem
  provided that $\CST(w)$ is given.
  As a tool a solution to the geometric problem of upper envelope
  \cite{DBLP:journals/ipl/Hershberger89} is applied.
  \begin{lemma}\label{lem:all_if_CST}
    Assume we are given $\CST(w)$.
    Then we can compute the locus of a shortest $\alpha$-partial cover
    for each $\alpha=1,2,\ldots,n$ in $\Oh(n\log n)$ time and $\Oh(n)$ space.
  \end{lemma}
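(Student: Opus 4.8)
The plan is to reduce the problem to computing the \emph{upper envelope} of a family of $\Oh(n)$ line segments in the plane, and then to read off all the answers by a single scan.

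First I would translate $\CST(w)$ into geometry. By Lemma~\ref{lem:formula}, consider an edge leading from an explicit node $v$ to its explicit parent $v'$. For every node $x$ on this edge (either $v$ itself or one of the implicit nodes $v_1,\dots,v_k$) we have $|x|\in\{|v'|+1,\dots,|v|\}$ and, substituting $i=|v|-|x|$ into the formula, $\c(x)=\Delta(v)\cdot|x|+\big(\c(v)-\Delta(v)\,|v|\big)$. Hence all the points $(|x|,\c(x))$ arising from one edge lie on a single line of slope $\Delta(v)$, and the edge contributes the segment of that line over the integer abscissae $|v'|+1,\dots,|v|$. Since every node of the suffix trie of $w$ lies on exactly one edge of $\CST(w)$ (being its bottom explicit endpoint or one of its implicit nodes), the resulting $\Oh(n)$ segments together realise exactly the point set $\{(|x|,\c(x))\}$ over all factors $x$ of $w$.

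Next I would set $F(\ell)=\max\{\c(x):|x|=\ell\}$, so that $F(\ell)$ is precisely the value at abscissa $\ell$ of the upper envelope of these segments. The length of a shortest $\alpha$-partial cover is $g(\alpha)=\min\{\ell:F(\ell)\ge\alpha\}$. I would compute the upper envelope with Hershberger's algorithm~\cite{DBLP:journals/ipl/Hershberger89} in $\Oh(n\log n)$ time and scan it once, in increasing order of abscissa, recording for each integer $\ell=1,\dots,n$ the value $F(\ell)$ together with one segment (hence one node) attaining it. A single left-to-right pass then maintains the running maximum $M(\ell)=\max_{\ell'\le\ell}F(\ell')$; at each index $\ell$ where $M$ strictly increases we have $M(\ell)=F(\ell)$, so I can assign $g(\alpha)=\ell$ with locus the recorded node, for every $\alpha$ in the range $\{M(\ell-1)+1,\dots,M(\ell)\}$. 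This fills the answers for all $\alpha=1,\dots,n$ in $\Oh(n)$ extra time, and converting the stored node (its edge together with an index, or its endpoints) to the required locus is immediate, as in Lemma~\ref{lem:if_CST}.

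The main obstacle is staying within the claimed $\Oh(n)$ space and $\Oh(n\log n)$ time, since the upper envelope of $n$ segments may have super-linear descriptive complexity. I would avoid materialising the whole envelope: evaluating it at the integer abscissae can be interleaved with the envelope scan, so that only the $\Oh(n)$-size arrays storing $F$, the running maximum, and the final answers are kept in memory. Some care in bookkeeping is also required so that the segment attaining $F(\ell)$ is faithfully translated back to a genuine (possibly implicit) node whose first and last occurrence can be reported, yielding the locus demanded by the \APC\ problem.
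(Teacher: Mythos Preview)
Your proposal is correct and follows essentially the same approach as the paper: translate each edge of $\CST(w)$ into a line segment via Lemma~\ref{lem:formula}, compute the upper envelope with Hershberger's algorithm, sample it at integer abscissae, and use prefix maxima to assign an answer to every $\alpha$. The paper's $\EE'$ and $\mu_i$ are precisely your $F$ and $M$; your extra remark about not materialising the full envelope (to sidestep its potentially super-linear Davenport--Schinzel complexity) is a refinement the paper glosses over by simply citing~\cite{DBLP:journals/ipl/Hershberger89} for $\Oh(m)$ space.
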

  \begin{proof}
    Consider an edge of $\CST(w)$ from $v$ to its parent $v'$ containing $k$ implicit nodes.
    For each such edge, we form a line segment on the plane connecting points
    $(|v|,\c(v))$ and $(|v|-k,\c(v)-k\cdot\Delta(v))$
    (if there are no implicit nodes on the edge, the line segment is a single point).
    Denote all such line segments obtained from $\CST(w)$ as $s_1,\ldots,s_m$, we have $m=\Oh(n)$.
    We consider the upper envelope $\EE$ of the set of these segments.
    Formally, if each $s_i$ connecting points $(x_i,y_i)$ and $(x'_i,y'_i)$, $x_i \le x'_i$,
    is interpreted as a linear function on a domain $[x_i,x'_i]$,
    $\EE$ is defined as a function $\EE : [1,n] \rightarrow [1,n]$ such that:
    $$\EE(x) = \max\{s_i(x)\,:\, i\in \{1,\ldots,m\}, x \in [x_i,x'_i]\}.$$
    Here we are actually interested in an \emph{integer envelope} $\EE'$, that is, $\EE$
    limited to integer arguments, see Fig.~\ref{fig:envelope}.
    By Lemma~\ref{lem:formula}, for any $j \in \{1,\ldots,n\}$,
    $\EE'(j)$ equals the maximum of $\Covered(u,w)$ over all factors $u$ of $w$
    such that $|u|=j$.
    A piecewise linear representation of $\EE$ can be computed in $\Oh(m\log{m})$ time and $\Oh(m)$ space
    \cite{DBLP:journals/ipl/Hershberger89}, therefore the function $\EE'$ for all its arguments
    can be computed in the same time complexity.

    \begin{figure}[htb]
      \centering
        \begin{tikzpicture}[scale=.35]
    \draw[->,-latex] (0,0) -- (16,0);
    \draw[->,-latex] (0,0) -- (0,16);
    \foreach \i in {1,...,15}
      \draw (\i,-0.2) -- node[below] {\scriptsize \i} (\i,0.2);
    \foreach \i in {1,...,15}
      \draw (-0.2,\i) -- node[left] {\scriptsize \i} (0.2,\i);
    \foreach \pos in {(1,10), (2,10), (3,9), (4,11), (5,12), (6,10), (7,7), (8,8), (9,9), (10,10), (11,11), (12,12), (13,13), (14,14), (15,15)}
      \draw \pos circle (0.15cm);
    \draw[thick,densely dashed]
      (1,10) -- (2,10) -- (2,9) -- (3,9) -- (3,11) -- (4,11) -- (4,12) -- (5,12)
      -- (5,10) -- (6,10) -- (6,7) -- (7,7) -- (7,8) -- (8,8) -- (8,9) -- (9,9)
      -- (9,10) -- (10,10) -- (10,11) -- (11,11) -- (11,12) -- (12,12) -- (12,13)
      -- (13,13) -- (13,14) -- (14,14) -- (14,15) -- (15,15);

    \draw (16,0.5) node[right] {word};
    \draw (16,-0.5) node[right] {length};
    \draw (-4.5,16.8) node[right] {covered};
    \draw (-4.5,15.8) node[right] {positions};

\draw (3,9) -- (1,3);
\filldraw (3,9) circle (0.07cm);
\filldraw (1,3) circle (0.07cm);
\draw (7,7) -- (4,4);
\filldraw (7,7) circle (0.07cm);
\filldraw (4,4) circle (0.07cm);
\draw (4,4) -- (4,4);
\filldraw (4,4) circle (0.07cm);
\draw (11,11) -- (4,4);
\filldraw (11,11) circle (0.07cm);
\filldraw (4,4) circle (0.07cm);
\draw (1,2) -- (1,2);
\filldraw (1,2) circle (0.07cm);
\draw (15,15) -- (2,2);
\filldraw (15,15) circle (0.07cm);
\filldraw (2,2) circle (0.07cm);
\draw (1,10) -- (1,10);
\filldraw (1,10) circle (0.07cm);
\draw (3,9) -- (2,6);
\filldraw (3,9) circle (0.07cm);
\filldraw (2,6) circle (0.07cm);
\draw (4,11) -- (4,11);
\filldraw (4,11) circle (0.07cm);
\draw (8,8) -- (5,5);
\filldraw (8,8) circle (0.07cm);
\filldraw (5,5) circle (0.07cm);
\draw (5,5) -- (5,5);
\filldraw (5,5) circle (0.07cm);
\draw (12,12) -- (5,5);
\filldraw (12,12) circle (0.07cm);
\filldraw (5,5) circle (0.07cm);
\draw (2,2) -- (2,2);
\filldraw (2,2) circle (0.07cm);
\draw (2,10) -- (2,10);
\filldraw (2,10) circle (0.07cm);
\draw (3,9) -- (3,9);
\filldraw (3,9) circle (0.07cm);
\draw (4,11) -- (4,11);
\filldraw (4,11) circle (0.07cm);
\draw (5,12) -- (5,12);
\filldraw (5,12) circle (0.07cm);
\draw (9,9) -- (6,6);
\filldraw (9,9) circle (0.07cm);
\filldraw (6,6) circle (0.07cm);
\draw (6,6) -- (6,6);
\filldraw (6,6) circle (0.07cm);
\draw (13,13) -- (6,6);
\filldraw (13,13) circle (0.07cm);
\filldraw (6,6) circle (0.07cm);
\draw (3,3) -- (3,3);
\filldraw (3,3) circle (0.07cm);
\draw (4,8) -- (3,6);
\filldraw (4,8) circle (0.07cm);
\filldraw (3,6) circle (0.07cm);
\draw (6,10) -- (5,9);
\filldraw (6,10) circle (0.07cm);
\filldraw (5,9) circle (0.07cm);
\draw (10,10) -- (7,7);
\filldraw (10,10) circle (0.07cm);
\filldraw (7,7) circle (0.07cm);
\draw (14,14) -- (7,7);
\filldraw (14,14) circle (0.07cm);
\filldraw (7,7) circle (0.07cm);
  \end{tikzpicture}
      \caption{\label{fig:envelope}
        Line segments constructed as in Lemma~\ref{lem:all_if_CST} for the $\CST(w)$ from Fig.~\ref{fig:CST_example}.
        The marked points joined with a dashed polyline show the values of the integer upper envelope function $\EE'$.
        We infer from the graph that the lengths of the shortest $\alpha$-partial covers of $w$
        are as follows:
        1 for $\alpha \le 10$, 4 for $\alpha=11$, 5 for $\alpha=12$, and $\alpha$ for $\alpha \ge 13$.
      }
    \end{figure}
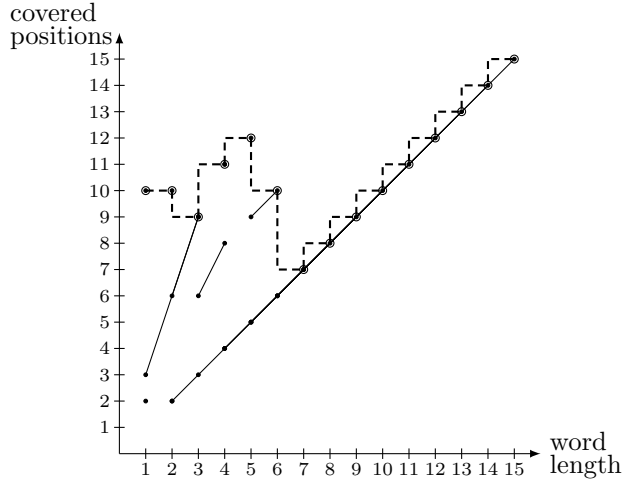

    Let us introduce a prefix maxima sequence for $\EE'$: $\mu_i = \max\{\EE'(j)\,:\,j \in \{1,\ldots,i\}\}$,
    with $\mu_0=0$.
    Note that $\mu_i$ is non-decreasing.
    If $\mu_i>\mu_{i-1}$ then the shortest $\alpha$-partial cover for all $\alpha \in (\mu_{i-1},\mu_i]$
    has length $i$.
    An example of such a partial cover can be recovered
    if we explicitly store the initial line segments used in the pieces of the representation of $\EE$.
    Thus the solution of the \APC\ problem can be obtained
    from the sequence $\mu_i$ in $\Oh(m)=\Oh(n)$ time.
  \end{proof}
  In the following two sections we provide an $\Oh(n\log n)$ time construction
  of $\CST(w)$.
  Together with Lemmas~\ref{lem:if_CST} and~\ref{lem:all_if_CST}, it
  yields Theorem~\ref{thm:main}.

  \section{Extension of Disjoint-Set Data Structure}\label{sec:fu}
  In this section we extend the classic disjoint-set data structure to compute
  the \emph{change lists} of the sets being merged, as defined below.
 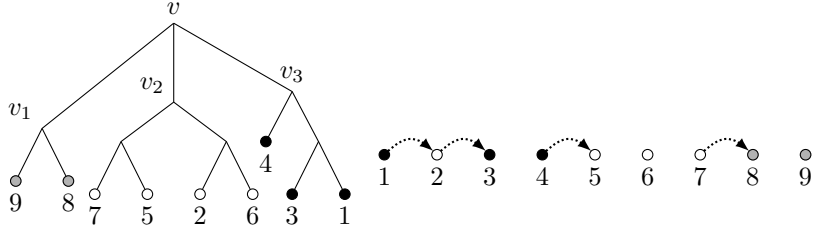
\begin{figure}[ht]
      \centering
          \begin{tikzpicture}[scale=.7]
      \def\unitX{1cm}
      \def\unitY{1cm}

      \begin{scope}[xshift=3cm,yshift=2cm]
        \draw[thick,densely dotted,->,-latex] (1.07,0.07) sin (1.5,0.3) cos (1.93,0.07);
        \draw[xshift=1cm,thick,densely dotted,->,-latex] (1.07,0.07) sin (1.5,0.3) cos (1.93,0.07);
        \draw[xshift=3cm,thick,densely dotted,->,-latex] (1.07,0.07) sin (1.5,0.3) cos (1.93,0.07);
        \draw[xshift=6cm,thick,densely dotted,->,-latex] (1.07,0.07) sin (1.5,0.3) cos (1.93,0.07);

        \filldraw (1,0) node[below=0.05cm] {1} circle (0.1cm);
        \draw (2,0) node[below=0.05cm] {2} circle (0.1cm);
        \filldraw (3,0) node[below=0.05cm] {3} circle (0.1cm);
        \filldraw (4,0) node[below=0.05cm] {4} circle (0.1cm);
        \draw (5,0) node[below=0.05cm] {5} circle (0.1cm);
        \draw (6,0) node[below=0.05cm] {6} circle (0.1cm);
        \draw (7,0) node[below=0.05cm] {7} circle (0.1cm);
        \filldraw[white!70!black] (8,0) circle (0.1cm);
        \draw (8,0) node[below=0.05cm] {8} circle (0.1cm);
        \filldraw[white!70!black] (9,0) circle (0.1cm);
        \draw (9,0) node[below=0.05cm] {9} circle (0.1cm);
      \end{scope}

      \begin{scope}[yshift=4.5cm]
        \draw (0,0) node[above] {$v$};
        \draw (-2.5,-2) node[above left] {$v_1$};
        \draw (0,-1.5) node[above left] {$v_2$};
        \draw (2.25,-1.3) node[above] {$v_3$};

        \draw (0,0) -- (0,-1.5) -- (-1,-2.25) -- (-1.5,-3.25);
        \draw (0,-1.5) -- (1,-2.25) -- (1.5,-3.25);
        \draw (-1,-2.25) -- (-0.5,-3.25);
        \draw (1,-2.25) -- (0.5,-3.25);
        \filldraw[white] (-1.5,-3.25) circle (0.1cm);
        \draw (-1.5,-3.25) node[below=0.05cm] {7} circle (0.1cm);
        \filldraw[white] (-0.5,-3.25) circle (0.1cm);
        \draw (-0.5,-3.25) node[below=0.05cm] {5} circle (0.1cm);
        \filldraw[white] (0.5,-3.25) circle (0.1cm);
        \draw (0.5,-3.25) node[below=0.05cm] {2} circle (0.1cm);
        \filldraw[white] (1.5,-3.25) circle (0.1cm);
        \draw (1.5,-3.25) node[below=0.05cm] {6} circle (0.1cm);

        \draw (0,0) -- (-2.5,-2) -- (-3,-3);
        \draw (-2.5,-2) -- (-2,-3);
        \filldraw[white!70!black] (-3,-3) circle (0.1cm);
        \draw (-3,-3) node[below=0.05cm] {9} circle (0.1cm);
        \filldraw[white!70!black] (-2,-3) circle (0.1cm);
        \draw (-2,-3) node[below=0.05cm] {8} circle (0.1cm);

        \draw (0,0) -- (2.25,-1.3) -- (2.75,-2.25) -- (2.25,-3.25);
        \draw (2.25,-1.3) -- (1.75,-2.25);
        \draw (2.75,-2.25) -- (3.25,-3.25);
        \filldraw (1.75,-2.25) node[below=0.05cm] {4} circle (0.1cm);
        \filldraw (2.25,-3.25) node[below=0.05cm] {3} circle (0.1cm);
        \filldraw (3.25,-3.25) node[below=0.05cm] {1} circle (0.1cm);
      \end{scope}

    \end{tikzpicture}
      \caption{
        Let $\PP$ be the partition of $\{1,\ldots,9\}$ whose classes
        consist of leaves in the subtrees rooted at children of $v$,
        $\PP =\{\{1,3,4\},\{2,5,6,7\},\{8,9\}\}$,
        and let $\PP'=\{\{1,\ldots,9\}\}$.
        Then $\ChangeList(\PP,\PP') = \{(1,2),\,(2,3),\,(4,5),\,(7,8)\}$
        (depicted by dotted arrows).
      }\label{fig:ChangeList}
    \end{figure}
  First, let us extend the $\next$ notation.
  For a partition $\PP   = \{P_1,\ldots,P_k\}$ of $U=\{1,\ldots,n\}$, we
  define
  $$\next_{\PP}(x)=\next_{P_i}(x)\text{ where }x\in P_i.$$
  Now for two partitions $\PP  , \PP  '$ let us define the \emph{change list} (see also Fig.~\ref{fig:ChangeList}) by
  $$\ChangeList(\PP  ,\PP  ')= \{(x, \next_{\PP'}(x)): \next_{\PP}(x)\ne \next_{\PP'}(x)\}.$$

 \newcommand{\id}{id}
 We say that $(\PP,\id)$ is a partition of $U$ \emph{labeled} by $L$ if $\PP$ is a partition of $U$ and 
 $\id : \PP \to L$ is a one-to-one (injective) mapping.
 A label $\ell\in L$ is called \emph{active} if $\id(P)=\ell$ for some $P\in
 \PP$ and \emph{free} otherwise.
 \begin{lemma}\label{lem:main_data_structure}
    Let $n\le k$ be positive integers such that $k$ is of magnitude $\Theta(n)$.
    There exists a data structure of size $\Oh(n)$, which maintains a partition $(\PP,\id)$  of $\{1,\ldots,n\}$
    labeled by $L=\{1,\ldots,k\}$ and supports the following operations:
    \begin{itemize}
      \item $\find(x)$ for $x\in\{1,\ldots,n\}$ gives the label of $P\in \PP$ containing~$x$.
      \item $\union(I,\ell)$ for a set $I$ of active labels and a free label
      $\ell$ replaces all $P\in \PP$ with labels in $I$ by their set-theoretic union with the label $\ell$.
      The change list of the corresponding modification of $\PP$ is returned.
    \end{itemize}
    Initially $\PP$ is a partition into singletons with $\id(\{x\})=x$.
    Any valid sequence of $\union$ operations is performed in $\Oh(n\log n)$ time.
    A single $\find$ operation takes $\Oh(1)$ time.
  \end{lemma}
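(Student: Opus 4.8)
The plan is to realize $\find$ by a plain element-to-label lookup and to charge everything else against a weighted-union (\emph{smaller-half}) schedule. Concretely, I would keep an internal group identifier for every part together with an array giving the current label of each group, so that $\find(x)$ is a single $\Oh(1)$ dereference. A call $\union(I,\ell)$ first locates the largest part among the labels of $I$, retags its group with the fresh label $\ell$ in $\Oh(1)$, and then explicitly moves every element of the \emph{other} (smaller) parts into that group, overwriting their label. Since an element is touched only when the part containing it is \emph{not} the largest of the merge, and then $|P|\ge|P_i|+|P_j|\ge 2|P_i|$ where $P_i$ is its old part and $P_j$ the largest, the part containing it at least doubles on each such occasion; hence every element is moved $\Oh(\log n)$ times and the total relabeling cost is $\Oh(n\log n)$.

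The crux is to produce the change list within the same budget, so first I would pin down what it contains. I claim that after merging $P_1,\ldots,P_m$ into $P=\bigcup_i P_i$, the pair $(x,y)$ lies in $\ChangeList(\PP,\PP')$ exactly when $y=\next_{\PP'}(x)$ is the successor of $x$ inside $P$ and $x,y$ belonged to \emph{different} old parts. Indeed, if $x$ and $y$ are adjacent in $P$ and share an old part $P_a$, then no element of $P_a$ lies strictly between them, so $\next_{\PP}(x)=y=\next_{\PP'}(x)$ and nothing changes; if they come from different old parts then $\next_{\PP}(x)\ne y$, which is precisely a change. Thus the change list is the set of adjacent \emph{straddling} pairs of the merged part, which can be read off while merging the sorted element sets.

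This characterization also yields the global bound. In a single union with largest part $P_j$, every straddling pair has at least one endpoint outside $P_j$ (two endpoints in $P_j$ would share a part), and in a linear order each element is an endpoint of at most two adjacent pairs, so the change list has size at most $2\sum_{i\ne j}|P_i|=2(|P|-|P_j|)$. Every element counted here sits in a non-largest part, so by the same doubling argument its part at least doubles; summed over the whole sequence this happens $\Oh(\log n)$ times per element, whence the total size of all change lists, and with it the total cost of emitting them, is $\Oh(n\log n)$.

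It remains to merge the sorted sets and report the straddling pairs in time proportional to the total size of the smaller parts plus the number of reported pairs, rather than the size of the dominating part $P_j$ --- this is the main obstacle, since a naive scan of $P_j$ would reintroduce a term as large as $|P_j|$ and destroy the bound. I would therefore maintain, alongside the group structure, a predecessor/successor structure over $\{1,\ldots,n\}$ that supports the insertions dictated by the smaller-half schedule in amortized constant time per moved element; locating each moved element's new neighbours then reveals exactly the straddling boundaries. Designing this structure and carrying out its amortized analysis is the technical heart of the construction, and I would defer it to Section~\ref{sec:details}. The remaining bookkeeping --- maintaining the active/free status of labels and handling degenerate unions with $|I|=1$ --- costs $\Oh(|I|)$ per operation and sums to $\Oh(n)$ over a valid sequence, giving the claimed $\Oh(n\log n)$ total time and $\Oh(n)$ space.
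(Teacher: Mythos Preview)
Your proposal is correct and follows essentially the same route as the paper: merge the smaller parts into the largest, observe that every change-list pair has an endpoint outside the largest part, and bound the total work by a smaller-half argument. The paper makes the deferred step concrete by storing each part in a height-balanced tree with batched $\MultiInsert$/$\MultiPred$/$\MultiSucc$ operations running in $O\bigl(|Y|\max(1,\log\tfrac{|X|}{|Y|})\bigr)$ time (after Brodal--Pedersen and Brown--Tarjan) and accounts for the time via the potential $\Phi(\PP)=\sum_{P\in\PP}|P|\log|P|$ rather than a direct per-element doubling count, but the underlying idea is the same.
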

 
  Note that these are actually standard disjoint-set data structure operations except for the fact that
  we require $\union$ to return the change list.
  The technical proof of Lemma~\ref{lem:main_data_structure} is postponed until Section~\ref{sec:details}.

  \section{$\Oh(n \log n)$-time Construction of $\CST(w)$}\label{sec:CST_construction}
  The suffix tree of $w$ augmented with extra nodes is called the
  \emph{skeleton} of $\CST(w)$, which we denote by $\sCST(w)$.
  It could be constructed using the fact that all square factors of a word can be
  computed in linear time~\cite{DBLP:journals/jcss/GusfieldS04,Crochemore2013,DBLP:conf/spire/CrochemoreIKRRW10}.
  However, we do not need such a complicated machinery here.
  We will compute $\sCST(w)$ on the fly, simultaneously annotating the nodes with $\c$, $\Delta$.

  We introduce auxiliary notions related to covered area of nodes:
  $$\c_h(v)=\sum_{\substack{i\in \Occ(v,w)\\\delta(i,v)<h}} \delta(i,v),\quad\extend_h(v)=|\{i\in \Occ(v,w)\ :\, \delta(i,v) \ge h \}|.$$

  \begin{observation}
    $\c(v)\,=\, \c_{|v|}(v)+\extend_{|v|}(v)\cdot |v|,\,
    \Delta(v)=\extend_{|v|}(v).$
  \end{observation}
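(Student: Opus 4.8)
The second identity requires no real argument: the set $\{i\in\Occ(v,w):\delta(i,v)\ge|v|\}$ whose cardinality defines $\Delta(v)$ is literally the same set whose cardinality defines $\extend_{|v|}(v)$, so unfolding the two definitions gives $\Delta(v)=\extend_{|v|}(v)$ at once. For the first identity the plan is to compute $\c(v)=\Covered(\hat v,w)$ as a sum of pairwise disjoint contributions, one per occurrence. List the elements of $\Occ(v,w)$ in increasing order as $i_1<i_2<\cdots<i_m$. Each occurrence $i_j$ covers the length-$|v|$ block $B_j=\{i_j,i_j+1,\ldots,i_j+|v|-1\}$, and $\c(v)=\bigl|\bigcup_{j=1}^m B_j\bigr|$. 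I would attribute each covered position to the \emph{rightmost} occurrence whose block contains it, so that the contribution of $i_j$ is exactly the set of positions in $B_j$ not covered by any later occurrence, and these contributions partition the union.

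The crucial observation---and the only point that needs care---is that, because all blocks share the same length $|v|$, a position of $B_j$ is covered by some later occurrence $i_{j'}$ with $j'>j$ if and only if it is already covered by the \emph{immediately} following one $i_{j+1}$: every later block starts at a position $\ge i_{j+1}$, hence covers only positions $\ge i_{j+1}$, while $B_{j+1}$ already covers the whole range $[i_{j+1},\,i_{j+1}+|v|-1]$. Consequently the fresh contribution of $i_j$ is $B_j\cap[i_j,\,i_{j+1}-1]$, whose size is $\min(\delta(i_j,v),|v|)$: it equals $\delta(i_j,v)=i_{j+1}-i_j$ when $\delta(i_j,v)<|v|$ (the blocks overlap), and equals $|v|$ when $\delta(i_j,v)\ge|v|$ (the blocks are disjoint or merely abutting). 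For the last occurrence $\delta(i_m,v)=\infty\ge|v|$ yields a contribution of $|v|$, consistent with the same formula.

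Summing the disjoint contributions then gives
$$\c(v)=\sum_{j=1}^m \min(\delta(i_j,v),|v|)=\sum_{\substack{i\in\Occ(v,w)\\\delta(i,v)<|v|}}\delta(i,v)\;+\;|v|\cdot\bigl|\{i\in\Occ(v,w):\delta(i,v)\ge|v|\}\bigr|,$$
and the two terms on the right are precisely $\c_{|v|}(v)$ and $|v|\cdot\extend_{|v|}(v)$ by their definitions, which is the claimed formula. The only genuine obstacle is the disjointness bookkeeping of the middle paragraph; once the reduction ``covered later $\iff$ covered by the immediate successor'' is justified (where equal block lengths are essential), the remainder is a direct split of a single sum.
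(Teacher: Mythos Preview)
Your argument is correct. The paper itself offers no proof of this observation---it is stated immediately after the definitions of $\c_h$ and $\extend_h$ and is meant to be read as self-evident---so your write-up is simply a fully unpacked version of what the authors leave implicit. The decomposition $\c(v)=\sum_i \min(\delta(i,v),|v|)$ via attributing each covered position to the rightmost occurrence containing it is the natural (and essentially only) way to justify the formula, and your handling of the one subtle point, that ``covered by some later occurrence'' reduces to ``covered by the immediate successor'' because all blocks have the same length, is clean and correct.
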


  \noindent
  In the course of the algorithm some nodes will have their values $c,\Delta$ already computed;
  we call them \emph{processed nodes}.
  Whenever $v$ will be processed, so will its descendants.
    
  The algorithm processes inner nodes $v$ of $\sCST(w)$ in the order of non-increasing height $h=|v|$.
  The height is not defined for leaves, so we start with $h=n+1$.
  Extra nodes are created on the fly using Observation~\ref{obs:primitively_rooted_square}
  (this takes place in the auxiliary $\Lift$ routine).

  We maintain the partition $\PP$ of $\{1,\ldots, n\}$ given by sets of leaves of subtrees
  rooted at \emph{peak nodes}.
  Initially the peak nodes are the leaves of $\sCST(w)$.
  Each time we process $v$ all its children are peak nodes.
  Consequently, after processing $v$ they are no longer peak nodes and $v$ becomes a new peak node.
  The sets in the partition are labeled with identifiers of the
  corresponding peak nodes.
  Recall that leaves are labeled with the starting positions of the corresponding suffixes. 
  We allow any labeling of the remaining nodes as long as each node of
  $\sCST(w)$ has a distinct label of magnitude $\Oh(n)$.
  For this set of labels we store the data structure of Lemma~\ref{lem:main_data_structure}
  to compute the change list of the changing partition.

    \begin{figure}[htpb]
      \centering
          \begin{tikzpicture}[scale=.8]
      \def\unitX{1cm}
      \def\unitY{1cm}
      
      \newcommand{\subtree}[4]{
        \draw (#1*\unitX,#2*\unitY)--+(0.5*\unitX,-1*\unitY)--+(-0.5*\unitX,-1*\unitY)--cycle;
        \filldraw[black,fill=white] (#1*\unitX,#2*\unitY) circle (\unitX/20);
        \draw node (#4) at (#1*\unitX,#2*\unitY) {};
        \draw node [#3] at (#1*\unitX,#2*\unitY) {$#4$};
      }
      
      \subtree{0}{0}{above left}{v_1}
      \subtree{2}{0}{above left}{v_2}
      \subtree{4}{0.2}{above left}{v_3}
      \subtree{6}{0.1}{above right}{v_4}
      \subtree{8}{0.5}{above right}{v_5}

      \draw let \p1=(v_3),\p2=(v_4) in
        ($(\x1/2+\x2/2,\y1+\unitY)$) node (v) {};
      \draw node at (v) [above] {$\mathit{v}$};
      \draw [dashed, shorten >= -0.1cm] (v)--(v_3);
      \draw [dashed, shorten >= -0.1cm] (v)--(v_4);
      \draw (v) circle (\unitX/20);

      \draw let \p1=(v_1),\p2=(v_5) in
        ($(\x1/2+\x2/2,\y1+3*\unitY)$) node (root) {};
      \draw node at (root) [above] {$\mathit{root}$};
      \draw [dashed, shorten >= -0.1cm] (root)--(v_1);
      \draw [dashed, shorten >= -0.07cm] (root)--(v_5);

      \draw [dashed,<->] let \p1=(root),\p2=(v),\p3=(1*\unitX,0) in
        (\x3,\y1)--(\x3,\y2) node[midway, left] {$h$};;

      \filldraw [black] let \p1=(v_3) in 
        (\x1-\unitX*0.3,\y1-\unitY) circle (\unitX/20) node [below] {$i$};

    \end{tikzpicture}
      \caption{
        One stage of the algorithm, where the peak nodes are $v_1,\ldots,v_5$ while the
        currently processed node is $v$.
        If $i \in \List[d]$ and $v_3=\find(i)$, then $d=\delta(i,v_3)=Dist[i]$.
        The current partition is $\PP=\{\Leaves(v_1),\,
        \Leaves(v_2),$ $\, \Leaves(v_3),\, \Leaves(v_4), \, \Leaves(v_5)\}$.
        After $v$ is processed, the partition changes to
        $\PP=\{\Leaves(v_1),\, \Leaves(v_2),\, \Leaves(v),\, \Leaves(v_5)\}$.
        The $\union$ operation merges $\Leaves(v_4), \Leaves(v_3)$
        and returns the corresponding change list.
      }\label{fig:tree}
    \end{figure}

  \noindent
  We maintain the following technical invariant (see Fig.~\ref{fig:tree}).

  \medskip
  \noindent
	{\bf Invariant$(h)$:}
  \begin{enumerate}[\bf (A)]
    \item
    For each peak node $z$ we store:
    $$\c'[z]=\c_h(z),\,\Delta'[z]=\extend_h(z).$$
    \item
    For each $i \in \{1,\ldots,n\}$ we store $\Dist[i]=\delta(i,\,\find(i))$.
    \item
    For each $d<h$ we store $\List[d]\,=\, \{i\,:\, \Dist[i]=d\}$.
  \end{enumerate}

  \noindent
  We use two auxiliary routines.
  The $\Lift$ operation updates $\c'$ and $\Delta'$ values when $h$ decrements.
  It also creates all extra nodes of depth $h$.
  The $\LocalCorrect$ operation is used for updating $\c'$ and $\Delta'$ values for children
  of the node $v$.
  The $\Dist$ and $\List$ arrays are stored to enable efficient implementation
  of these two routines.

   \newcommand{\algname}{\textsc{ComputeCST}}
  \begin{myalgorithm}[H]{11.8 cm}
    \Algo{\algname(w)}
    {
      $T :=$\,suffix tree of $w$\;
      $\PP$ := partition of $\{1,\ldots,n\}$ into singletons\;
      \lForEach{$v:$ a leaf of $T$}{$\c'[v]:=0$, $\Delta'[v]:=1$}\;
      \For{$h:=n+1$ \KwSty{downto} $0$}{
        $\Lift(h)$\;
        \{Now part (A) of Invariant($h$) is satisfied\}\\
        \ForEach{$v:$ an inner node of $T$, $|v|=h$}{
          $\c'[v]:=\, \sum_{u\in \children(v)}\, \c'[u]$\;
          $\Delta'[v]:=\, \sum_{u\in \children(v)}\Delta'[u]$\;
          $\ChangeList(v)$ := $\union(\children(v),v)$\\
            \lForEach{$(p,q) \in \ChangeList(v)$}{$\LocalCorrect(p,q,v)$}\;
            $\c[v]\,:=\, \c'[v]+\Delta'[v]\cdot |v|$; $\Delta[v]:=\Delta'[v]$\;
        }
      }
      \Return{$T$ together with values of  $\c, \Delta$\;}
    }
  \end{myalgorithm}

    \paragraph{\bf Description of the $\Lift(h)$ Operation}
    The procedure $\Lift$ plays an important preparatory role in processing the current node.
    According to part (A) of our invariant, for all peak nodes $z$ we know the values:
    $\c'[z]=\c_{h+1}(z),\,\Delta'[z]=\extend_{h+1}(z).$
    Now we have to change $h+1$ to $h$ and guarantee validity of the invariant:
    $\c'[z]=\c_{h}(z),\,\Delta'[z]=\extend_{h}(z).$
    This is exactly how the following operation updates $\c'$ and $\Delta'$.

    It also creates all extra nodes of depth $h$ that were not explicit nodes of
    the suffix tree.
    By Observation~\ref{obs:primitively_rooted_square}, if $i\in \List[h]$
    then at position $i$ in $w$ there is an occurrence of a primitively rooted square
    of half length $h$.
    Consequently, an extra node corresponding to this occurrence is created
    in the $\Lift$ operation.

    \begin{myfunction}[H]{9 cm}
      \Algo{$\Lift(h)$}{
        \ForEach{$i$ \KwSty{in} $\List[h]$}{
          $v:=\find(i)$\;
          $\Delta'[v]:=\Delta'[v]+1$; $\c'[v]:=\c'[v]-h$\;
          \If{$|\mathit{parent}(v)|<h$}{
            Create a node of depth $h$ on the edge from $\mathit{parent}(v)$ to $v$
          }
        }
      }
    \end{myfunction}

    \paragraph{\bf Description of the $\LocalCorrect(p,q,v)$ Operation}
    Here we assume that $\hat{v}$ occurs at positions $p<q$ and that these are consecutive occurrences.
    Moreover, we assume that these occurrences are followed by distinct characters, i.e. $(p,q)\in \ChangeList(v)$. 
   	The $\LocalCorrect$ procedure updates $\Dist[p]$ to make part (B) of the invariant hold for $p$ again.
   	The data structure $\List$ is updated accordingly so that (C) remains satisfied.
    \begin{myfunction}[H]{9.5 cm}
      \Algo{$\LocalCorrect(p,q,v)$}{
        $d:=q-p$;\ $d':=\Dist[p]$\;
        \lIf{$d'<|v|$}{
          $\c'[v]:=\c'[v]-d'$
          } \lElse{
          $\Delta'[v]:=\Delta'[v]-1$}\;
        \lIf{$d<|v|$}{
          $\c'[v]:=\c'[v]+d$ }
        \lElse{
          $\Delta'[v]:=\Delta'[v]+1$}\;
        $\Dist[p]:=d$\;
         $\mathit{remove}(i, \List[d'])$;\ $\mathit{insert}(i, \List[d])$\;
      }
    \end{myfunction}

    \paragraph{\bf Complexity of the Algorithm}
    In the course of the algorithm we compute $\ChangeList(v)$ for each $v\in T$.
    Due to Lemma~\ref{lem:main_data_structure} we have:
    $$\sum_{v \in T}\; |\ChangeList(v)|\; =\; \Oh(n\log n).$$
    Consequently we perform $\Oh(n\log n)$ operations $\LocalCorrect$.
    In each of them at most one element is added to a list $\List[d]$ for some $d$.
    Hence the total number of insertions to these lists is also $\Oh(n\log n)$.

    The cost of each operation $\Lift$ is proportional to the total size of
    the list $\List[h]$ processed in this operation.
    For each $h$ the list $\List[h]$ is processed once and
    the total number of insertions into lists is $\Oh(n\log n)$,
    therefore the total cost of all operations $\Lift$ is also $\Oh(n\log n)$.
    This proves the following fact which, together with Lemmas~\ref{lem:if_CST} and~\ref{lem:all_if_CST},
    implies our main result (Theorem~\ref{thm:main}).

    \begin{lemma}
      Algorithm \algname~constructs $\CST(w)$ in $\Oh(n\log n)$ time and $\Oh(n)$ space, where $n=|w|$.
    \end{lemma}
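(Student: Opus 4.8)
The plan is to split the statement into a correctness claim --- that \algname\ annotates every explicit node $v$ of $\sCST(w)$ with the true values $\c(v),\Delta(v)$ and materializes exactly the extra nodes of $\CST(w)$ --- and the stated resource bounds. For correctness I would run a downward induction on $h$ showing that Invariant$(h)$ holds at the top of each loop iteration, right after $\Lift(h)$. The base case $h=n+1$ is immediate from the initialization: each leaf is its own peak node, so $\Occ$ is a singleton and $\delta=\infty$, matching $\c'[v]=0=\c_{n+1}(v)$ and $\Delta'[v]=1=\extend_{n+1}(v)$; parts (B) and (C) hold because every $\Dist[i]=\infty$ and no finite list is populated.

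For the inductive step I would first verify that $\Lift(h)$ converts Invariant$(h+1)$ into part (A) of Invariant$(h)$. Lowering the threshold from $h+1$ to $h$ changes $\c_h$ and $\extend_h$ only through occurrences $i$ with $\delta(i,\find(i))=h$, i.e.\ exactly the members of $\List[h]$; for each such $i$ the value $\c_h$ drops by $h$ and $\extend_h$ grows by one, which is precisely the pair of updates performed. Next I would analyze the processing of an inner node $v$ with $|v|=h$. Summing $\c'$ and $\Delta'$ over $\children(v)$ produces $\sum_u\c_h(u)$ and $\sum_u\extend_h(u)$, which would already equal $\c_h(v)$ and $\extend_h(v)$ if merging the children's leaf sets did not alter any $\delta$-value.

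The crux of the argument --- and the step I expect to be the main obstacle --- is the combinatorial claim that $\delta(p,v)\neq\delta(p,\find(p))$ holds for exactly the pairs $(p,q)$ returned in $\ChangeList(v)=\union(\children(v),v)$; everywhere else the ``next occurrence'' is unchanged by the merge, so the child-wise sums are already correct. Granting this, each $\LocalCorrect(p,q,v)$ subtracts the stale summand (reading $\Dist[p]$ before overwriting it) and adds the contribution of the new distance $q-p$, while simultaneously repairing parts (B) and (C). With $\c'[v]=\c_h(v)$ and $\Delta'[v]=\extend_h(v)$ and $h=|v|$, the closing assignments give $\c[v]=\c_{|v|}(v)+\extend_{|v|}(v)\cdot|v|=\c(v)$ and $\Delta[v]=\extend_{|v|}(v)=\Delta(v)$. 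For the extra nodes I would observe that when $\Lift(h)$ reaches $i\in\List[h]$, the peak node $z=\find(i)$ is the unique ancestor of leaf $i$ with $|\mathit{parent}(z)|\le h<|z|$ and $\Occ(z,w)=\Leaves(z)$; since the occurrence set is constant along the edge above $z$, the equality $\Dist[i]=\delta(i,z)=h$ says that the depth-$h$ point of that edge is the half of a primitively rooted square by Observation~\ref{obs:primitively_rooted_square}. The test $|\mathit{parent}(z)|<h$ then creates it exactly when it is not already explicit, and a symmetric argument shows every such half-square arises this way, so the materialized nodes are precisely the extra nodes.

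For the bounds I would reuse the complexity accounting already set up. The suffix tree costs $\Oh(n\log|\Sigma|)=\Oh(n\log n)$; by Lemma~\ref{lem:main_data_structure} we have $\sum_{v}|\ChangeList(v)|=\Oh(n\log n)$, which caps the number of $\LocalCorrect$ calls (each $\Oh(1)$ when the lists $\List[d]$ are doubly linked with direct handles), hence also the total number of list insertions and therefore $\sum_h|\List[h]|$, the aggregate cost of all $\Lift$ calls; the per-node summations add only $\Oh(n)$. For space, the suffix tree, the $\Oh(n)$ extra nodes (linear by the bound on distinct squares), the structure of Lemma~\ref{lem:main_data_structure}, and the arrays $\Dist,\List$ are each $\Oh(n)$. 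Together these give $\Oh(n\log n)$ time and $\Oh(n)$ space.
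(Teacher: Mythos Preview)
Your proposal is correct and, for the complexity part, follows the paper's argument essentially verbatim: the paper's proof of this lemma is only the ``Complexity of the Algorithm'' paragraph, which bounds $\sum_v |\ChangeList(v)|$ via Lemma~\ref{lem:main_data_structure}, counts $\LocalCorrect$ calls and list insertions, and charges $\Lift$ to $\sum_h |\List[h]|$. Your inductive correctness argument (maintaining Invariant$(h)$, reconciling child sums via the change list, and creating extra nodes in $\Lift$) is not spelled out as a formal proof in the paper --- it is left implicit in the prose descriptions of $\Lift$ and $\LocalCorrect$ --- so you are supplying more detail than the paper does, but along exactly the intended lines.
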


  \section{Implementation Details}\label{sec:details}
  In this section we give a proof of Lemma~\ref{lem:main_data_structure}.  
  We use an approach similar to Brodal and Pedersen \cite{DBLP:conf/cpm/BrodalP00}
  (who use the results of \cite{DBLP:journals/jacm/BrownT79})
  originally devised for computation of maximal quasiperiodicities.

  Theorem 3 of \cite{DBLP:conf/cpm/BrodalP00} states that
  a subset $X$ of a linearly ordered universe can be stored in a height-balanced
  tree of linear size supporting the following operations:
  \begin{description}
    \item[\quad $X.\MultiInsert(Y)$:] insert all elements of $Y$ to $X$,
    \item[\quad $X.\MultiPred(Y)$:] return all $(y,x)$ for $y\in Y$ and $x=\max\{z\in X, z<y\}$,
    \item[\quad $X.\MultiSucc(Y)$:] return all $(y,x)$ for $y\in Y$ and $x=\min\{z\in X, z>y\}$,
  \end{description}
  in $O\left(|Y|\max\left(1, \log\frac{|X|}{|Y|}\right)\right)$ time.

  \smallskip

  In the data structure we store each $P\in \PP$ as a height-balanced tree.
  Additionally, we store several auxiliary arrays, whose semantics follows.
  For each $x\in \{1,\ldots,n\}$ we maintain a value $\next[x]=\next_\PP(x)$
  and a pointer $\tree[x]$ to the tree representing $P$ such that $x\in P$.
  For each $P\in \PP$ (technically for each tree representing $P\in \PP$) we
  store $\id[P]$ and for each $\ell\in L$ we store $\id^{-1}[\ell]$, a pointer to the
  corresponding tree (null for free labels).

  Answering $\find$ is trivial as it suffices to follow the $\tree$
  pointer and return the $\id$ value.
  The $\union$ operation is performed according
  to the pseudocode given below (for brevity we write $P_{i}$ instead of
  $\id^{-1}[i]$).

  \begin{claim}
    The $\union$ operation correctly computes the change list and updates the
    data structure.
  \end{claim}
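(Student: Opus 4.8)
The specification requires two things: that after $\union(I,\ell)$ the stored arrays describe the partition $\PP'$ in which the classes $\{P_i : i\in I\}$ are replaced by their union $Z=\bigcup_{i\in I}P_i$ relabelled $\ell$, and that the returned list equals $\ChangeList(\PP,\PP')$. The plan is to verify the pseudocode against these two requirements in turn, observing first that only the classes indexed by $I$ are modified, so every tree, label, and $\next$ value associated with elements outside $Z$ stays correct automatically and may be ignored.

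For the structural update I would track the retained tree $X$ (the largest class, kept for the sake of the time bound) and the sorted sequence $Y$ of the elements of the remaining members of $I$. After $X.\MultiInsert(Y)$ the surviving tree holds exactly $Z$, and it then suffices to check that $\tree[x]$ is redirected to this tree for every $x\in Y$ (elements of $X$ already point to it), that $\id$ of the surviving tree is reset to $\ell$ while each $i\in I$ is freed and $\id^{-1}$ is updated accordingly, and that $\next[x]$ is overwritten precisely at the first coordinates of the returned list. Since $\find$ only reads $\tree$ and $\id$, its correctness for $\PP'$ is immediate once these pointers are consistent.

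The core of the argument is the change list. Because $Z\supseteq P_i$ for the class $P_i\ni x$, adding elements can only move the within-class successor closer, so $\next_{\PP'}(x)\le\next_{\PP}(x)$, with equality exactly when $\next_{\PP'}(x)$ already lies in $P_i$. Consequently $(x,\next_{\PP'}(x))\in\ChangeList(\PP,\PP')$ holds iff $x$ and $\next_{\PP'}(x)$ originate from different members of $I$; that is, the change list is exactly the set of pairs adjacent in the sorted order of $Z$ but coming from different classes. I would then show that the queries enumerate these cross-class adjacencies without repetition, splitting them into two kinds. An adjacency with one endpoint in $X$ and one in $Y$ is recovered from $X.\MultiPred(Y)$ and $X.\MultiSucc(Y)$ as the pairing of each maximal gap of $X$ with the extreme $Y$-elements falling inside it. An adjacency with both endpoints in $Y$ arises between two consecutive elements of $Y$ that lie in the same gap of $X$ (so that no element of $X$ separates them) and have different values of $\find$. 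Matching each emitted pair against the definition of $\ChangeList$, and confirming that no cross-class adjacency is omitted, yields the required equality; reading off the second coordinates then supplies exactly the corrected $\next$ values.

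The main obstacle is carrying this out for a union of more than two classes while guaranteeing that the reported pairs refer to the final $\PP'$ rather than to some intermediate merge: folding the smaller classes into $X$ one at a time could make a pair adjacent early, only to have a still-unmerged element separate it later, leaving a stale entry that would misdirect the subsequent $\LocalCorrect$ calls. I avoid this by inserting the whole $Y$ in a single $\MultiInsert$, so that the adjacency structure queried is already that of $Z$. The only delicate bookkeeping that remains is the case of two $Y$-endpoints, where the sorted order of $Y$ must be combined with the gap information from $X.\MultiSucc(Y)$ to discard pairs separated by an element of $X$, and where one must check that each such adjacency is emitted once and only once.
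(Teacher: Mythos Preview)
Your proposal does not verify the pseudocode that the claim is about; instead you redesign the procedure and argue for your variant. In the paper's code the first loop over $i\in I\setminus\{i_0\}$ performs all the $S.\MultiInsert(P_i)$ calls and only after that loop finishes does the second loop issue the $\MultiPred$/$\MultiSucc$ queries. Hence when any query is made, $S$ already equals the full union $Z$. Your worry that ``folding the smaller classes into $X$ one at a time could make a pair adjacent early, only to have a still-unmerged element separate it later'' is therefore misplaced: the paper avoids exactly this not by batching the insert into a single call, as you propose, but simply by separating the insert phase from the query phase.

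Because every query is on $S=Z$, the paper's correctness argument is much shorter than your $X$--$Y$ versus $Y$--$Y$ case split. If $(a,b)\in\ChangeList(\PP,\PP')$ then $a$ and $b$ come from different original classes, so at least one of them lies in some $P_j$ with $j\ne i_0$; if it is $b$, then $S.\MultiPred(P_j)$ returns $(b,a)$ since $a$ is the predecessor of $b$ in $Z$, and symmetrically for $\MultiSucc$ if it is $a$. Conversely, every pair produced has $b=\next_Z(a)$, and the test ``$\next[a]\ne b$'' still reads the \emph{old} $\next$ array (it is overwritten only in the final line), so exactly the pairs whose successor changed survive. No separate scan for $Y$--$Y$ neighbours is needed. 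Your alternative---querying the original $P_{i_0}$ and then handling $Y$--$Y$ adjacencies by combining the sorted order of $Y$ with the gap information---might be made to work, but it is a different algorithm, and analysing it does not establish the claim as stated.
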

  \begin{proof}
    In the $\union$ operation for sets $P_i$, $i \in I$, we find the largest set $P_{i_0}$
    and $\MultiInsert$ all the elements of the remaining sets to $P_{i_0}$.
    If $(a,b)$ is in the change list, then $a$ and $b$ come from
    different sets $P_i$, in particular at least one of them does not come from
    $P_{i_0}$. Depending on which one it is, the pair $(a,b)$ is found by
    $\MultiPred$ or $\MultiSucc$ operation.
    While computing $C$, the table $\next$ is not updated yet
    (i.e. corresponds to the state before $\union$ operation) while $S$ is already
    updated. Consequently the pairs inserted to $C$ indeed belong to the change
    list. 
    Once $C$ is proved to be the change list, it is clear that $\next$ is
    updated correctly.  For the other components of the data structure,
    correctness of updates is evident.
    
  \end{proof}

  \begin{myfunction}[H]{10.8 cm}
  \Algo{$\union(I,\ell)$}{
      $i_0 := \textrm{argmax}\{|P_i|: i\in I\}$\;
      $S := P_{i_0}$\;
     \ForEach{$i\in I\setminus\{i_0\}$}{
      \lForEach{$x\in P_i$}{$\tree[x] := S$}\;
      $S.\MultiInsert(P_i)$\;
     }
     $C := \emptyset$\;
     \ForEach{$i\in I\setminus\{i_0\}$}{\ForEach{$(b,a)\in S.\MultiPred(P_i)$}{
      \lIf{$\next[a]\ne b$}{$C := C \cup \{(a,b)\}$}\;
     }
     \ForEach{$(a,b)\in S.\MultiSucc(P_i)$}{
      \lIf{$\next[a]\ne b$}{$C := C \cup \{(a,b)\}$}\;
      }
      $\id^{-1}[i] := \textrm{null}$\;
     }
     $\id^{-1}[i_0] := \textrm{null}$\;
     $\id[S]:=\ell$; $\id^{-1}[\ell]:= S$\;
     \lForEach{$(x,y)\in C$}{$\next[x] := y$}
     \Return{$C$}\;
  }
  \end{myfunction}
  
  \begin{claim}
    Any sequence of $\union$ operations takes $\Oh(n\log n)$ time in total.
  \end{claim}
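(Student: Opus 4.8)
The plan is to show that the running time is dominated by the $\MultiInsert$, $\MultiPred$, and $\MultiSucc$ calls, and then to bound their total cost by a weighted-union (smaller-into-larger) argument refined so as to exploit the $\Oh\!\big(|Y|\max(1,\log\frac{|X|}{|Y|})\big)$ bound of Brodal and Pedersen rather than a crude $\Oh(|Y|\log|X|)$ bound.

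First I would analyse a single call $\union(I,\ell)$. Let $N=\sum_{i\in I}|P_i|$ be the size of the resulting set. Since $i_0$ is chosen as the index of the largest set, $|P_i|\le|P_{i_0}|$ for every $i\in I\setminus\{i_0\}$, and hence $N\ge|P_{i_0}|+|P_i|\ge 2|P_i|$. All of the $\MultiInsert$, $\MultiPred$, and $\MultiSucc$ operations act on the accumulating set $S=P_{i_0}$, whose size never exceeds $N$; so by the Brodal--Pedersen bound each such operation on $P_i$ costs $\Oh\!\big(|P_i|\max(1,\log\frac{N}{|P_i|})\big)$. The remaining work — relabelling $\tree[x]$ for $x\in P_i$, assembling $C$, and the final update of $\next$ — is linear in $\sum_{i\ne i_0}|P_i|$. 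Thus the entire call costs $\Oh\!\big(\sum_{i\in I\setminus\{i_0\}}|P_i|\max(1,\log\frac{N}{|P_i|})\big)$.

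Next I would sum this bound over the whole sequence of unions and split it into the additive part (from the constant $1$ in the $\max$) and the logarithmic part. For the additive part $\sum\sum_{i\ne i_0}|P_i|$, I charge $1$ to each element of every non-largest set. Whenever an element $x$ lies in a non-largest set $P_i$, the class containing $x$ grows from size $|P_i|$ to $N\ge 2|P_i|$, i.e.\ it at least doubles; hence $x$ is charged at most $\log_2 n$ times and the additive part is $\Oh(n\log n)$.

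The logarithmic part is the crux, and the telescoping bound is the step I expect to require the most care. I charge $\log\frac{N}{|P_i|}$ to each element of $P_i$. Fixing an element $x$, list in chronological order the unions $U_1,\ldots,U_r$ in which $x$ belongs to a non-largest set, and let $a_j,b_j$ denote the size of the class containing $x$ immediately before and after $U_j$. The charge incurred at $U_j$ is exactly $\log b_j-\log a_j$; moreover $b_j\ge a_j$, and since a class only grows between consecutive unions we have $a_{j+1}\ge b_j$. Therefore $\sum_{j=1}^r(\log b_j-\log a_j)$ telescopes to at most $\log b_r-\log a_1\le\log n$, so each element receives total logarithmic charge $\Oh(\log n)$ and the logarithmic part is $\Oh(n\log n)$. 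Combining the two parts yields the claimed $\Oh(n\log n)$ bound on any sequence of $\union$ operations.
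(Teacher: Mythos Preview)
Your proof is correct and uses essentially the same amortization as the paper, just packaged differently. The paper introduces the potential $\Phi(\PP)=\sum_{P\in\PP}|P|\log|P|$ and shows that the cost of a single $\union$ is bounded by the increase in $\Phi$; since $0\le\Phi\le n\log n$, the total follows. Your per-element telescoping is the same argument unrolled: rewriting $\Phi(\PP)=\sum_{x}\log|P(x)|$, the potential increase at a union is precisely $\sum_{x}(\log b-\log a)$, which is exactly your charge. One small redundancy in your write-up: once you note $N\ge 2|P_i|$ for $i\ne i_0$, you already have $\log\tfrac{N}{|P_i|}\ge 1$, so the $\max$ collapses to the logarithmic term and the separate ``additive part'' analysis is unnecessary (the paper makes this simplification explicitly).
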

  \begin{proof}
    Let us introduce a potential function $\Phi(\PP) = \sum_{P\in \PP} |P|\log |P|$.
    We shall prove that the running time of a single $\union$ operation
    is proportional to the increase in potential.
    Clearly 
    $$ 0 \le \Phi(\PP) = \sum_{P\in \PP} |P|\log |P| \le \sum_{P\in \PP} |P|\log n
    = n\log n,$$
    so this suffices to obtain the desired $\Oh(n\log n)$ bound.
    
    Let us consider a $\union$ operation that merges partition classes of sizes
    $p_1\ge p_2 \ge \ldots \ge p_k$ to a single class of size $p = \sum_{i=1}^k
    p_i$.
    The most time-consuming steps of the algorithm are the operations on
    height-balanced trees, which, for single $i$, run in $O\left(\max\left(p_i,
    p_i\log\tfrac{p}{p_i}\right)\right)$ time. 
    These operations are not performed
    for the largest set and for the remaining ones we have $p_i < \frac{1}{2} p$
    (i.e. $\log{\frac{p}{p_i}}\ge 1$). This lets us bound the time complexity
    of the $\union$ operation as follows:
    $$\sum_{i=2}^k
    \max\left(p_i, p_i\log\tfrac{p}{p_i}\right) = \sum_{i=2}^k
    p_i\log\tfrac{p}{p_i} \le \sum_{i=1}^k p_i\log\tfrac{p}{p_i} = 
    \sum_{i=1}^k p_i(\log p - \log p_i)  = p\log p - \sum_{i=1}^k p_i\log p_i,$$
    which is equal to the increase in potential.
  \end{proof}

  \section{By-Products of Cover Suffix Tree}\label{sec:by-products}
  In this section we present two additional applications of the Cover Suffix Tree.
  We show that, given $\CST(w)$ (or $\CST$ of a word that can be obtained from $w$
  in a simple manner), one can compute in linear time all distinct primitively rooted squares
  in $w$ and a linear representation of all the seeds of $w$, in particular, the shortest
  seeds of $w$.
  This shows that constructing this data structure is \emph{at least as hard}
  as computing all primitively rooted squares and seeds. While there are linear-time algorithms
  for these problems \cite{DBLP:journals/jcss/GusfieldS04,DBLP:conf/focs/KolpakovK99,Crochemore2013} and \cite{DBLP:conf/soda/KociumakaKRRW12}, they
  are all complex and rely on the combinatorial properties specific to the repetitive structures they seek for.

  \begin{theorem}
    Assume that the Cover Suffix Tree of a word of length $n$ can be computed in $T(n)$ time.
    Then all distinct primitively rooted squares in a word $w$ of length $n$ can be computed in
    $T(2n)$ time.
  \end{theorem}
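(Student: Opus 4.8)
The plan is to reduce the enumeration of primitively rooted squares of $w$ to the enumeration of the \emph{extra nodes} of a Cover Suffix Tree of a padded word $w'$ of length $2n$. Recall that, by Observation~\ref{obs:primitively_rooted_square} and the definition of $\CST$, the extra nodes are precisely the loci of factors $\hat v$ for which $\hat v\hat v$ is a primitively rooted square \emph{and} $\hat v$ is implicit in the ordinary suffix tree; square halves that happen to be explicit (right-branching) nodes are not flagged. This is the crux of the difficulty: applying the observation directly to $w$ would force us to decide square-halfness for individual explicit nodes, and from the annotations $\c,\Delta$ alone this cannot be done for a black-box query, since the test $\exists\, i\!:\!\delta(i,v)=|v|$ needs information about how $\Delta$ changes just below $v$, which $\CST$ does not expose. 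The remedy is to transform $w$ so that \emph{every} square half becomes implicit, and hence an extra node.

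I would fix a fresh separator $s\notin\Sigma$ and set $w' = s\,w[1]\,s\,w[2]\cdots s\,w[n]$, so that $|w'|=2n$ and every original letter occupies an even position and is immediately followed by $s$. I would then establish a two-way correspondence. Forward: if $u^2$ is a primitively rooted square of $w$, then $u'=s\,u[1]\,s\,u[2]\cdots s\,u[|u|]$ satisfies that $u'u'$ occurs in $w'$; $u'$ is primitive, since any nontrivial period must be even (otherwise an $s$ would align with a letter position) and would project onto a period of $u$, contradicting primitivity; and every occurrence of $u'$ ends in a letter and is therefore followed by $s$, so $u'$ has a unique right extension and is implicit, i.e.\ an extra node of $\CST(w')$. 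Backward: every extra node whose factor begins with $s$ has the form $s\,a_1\,s\,a_2\cdots s\,a_m$, which automatically has even length because two consecutive $s$'s never occur in $w'$ (so a square built from an odd-length such factor is impossible); deleting the separators yields a primitively rooted square $a_1\cdots a_m$ of $w$. Distinct roots give distinct nodes, so the map is a bijection, and the side condition ``$\hat v$ begins with $s$'' is tested in $\Oh(1)$ from the parity of any occurrence's starting position.

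With the correspondence in hand the algorithm is immediate: compute $\CST(w')$ in $T(2n)$ time, scan its $\Oh(n)$ extra nodes, retain those beginning with $s$, and for each output the corresponding occurrence of the square in $w$ by halving positions and lengths. This postprocessing is $\Oh(n)$ and is absorbed into $T(2n)$, since $T(n)=\Omega(n)$.

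The step I expect to be the main obstacle is the boundary case. A half $u'$ of the considered form that \emph{also} occurs as a suffix of $w'$ is followed there by the suffix-tree sentinel but by $s$ at its square occurrence; it is thus right-branching, hence explicit rather than extra, and the corresponding square of $w$ would be missed. This happens exactly when the root $u$ is itself a suffix of $w$. I would close this gap by ensuring that the last original letter is always followed by a separator as well, for instance by first appending one fresh letter to $w$ before interleaving, so that the unique trailing letter can never belong to any square half; this removes the exceptional suffix occurrences at the cost of only $\Oh(1)$ additional length. Verifying that such padding creates no spurious squares and that primitivity transfers in both directions is then the remaining routine check.
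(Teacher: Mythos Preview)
Your approach is essentially the paper's: interleave $w$ with a fresh separator and read off the extra nodes (equivalently, the explicit non-branching internal nodes) of the resulting $\CST$ whose factors begin with the separator. The only difference is the boundary handling: the paper sets $w'=\mathtt{0}\,w[1]\,\mathtt{0}\,w[2]\cdots\mathtt{0}\,w[n]\,\mathtt{0}$ with a \emph{trailing} separator, which guarantees that every even-length factor starting with $\mathtt{0}$ is followed by $\mathtt{0}$ at each occurrence and is hence implicit, so the suffix issue you diagnosed never arises and no second fresh symbol or patch is needed.
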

  \begin{proof}
    Let $\mathtt{0} \notin \Sigma$ be a special symbol.
    Let $\varphi:\Sigma^*\rightarrow(\Sigma\cup\{\mathtt{0}\})^*$ be a morphism such that
    $\varphi(c)=\mathtt{0}c$ for any $c \in \Sigma$.
    We consider the word $w'=\varphi(w)\mathtt{0}$, that is, the word $w$ with $\mathtt{0}$-characters
    inserted at all its inter-positions, e.g.\ if $w=\mathtt{aabab}$ then $w'=\mathtt{0a0a0b0a0b0}$.
    
    Let us consider the set of explicit non-branching nodes of $\CST(w')$
    and select among them the nodes corresponding to even-length factors of $w'$
    starting with the symbol $\mathtt{0}$.
    It suffices to note that there is a one-to-one correspondence between these
    nodes and the halves of primitively rooted squares in $w$.
  \end{proof}

  \begin{figure}[htpb]
  \begin{center}
  \begin{tikzpicture}[scale=0.9]
{\small
 \draw (0,0) node[above] {\texttt{a}};
  \draw (0.5,0) node[above] {\texttt{b}};
  \draw (1,0) node[above] {\texttt{a}};
  \draw (1.5,0) node[above] {\texttt{a}};
  \draw (2,0) node[above] {\texttt{b}};
  \draw (2.5,0) node[above] {\texttt{a}};
  \draw (3,0) node[above] {\texttt{a}};
  \draw (3.5,0) node[above] {\texttt{b}};
  \draw (4,0) node[above] {\texttt{a}};
  \draw (4.5,0) node[above] {\texttt{a}};
  \draw (5,0) node[above] {\texttt{a}};
  \draw (5.5,0) node[above] {\texttt{b}};
  \draw (6,0) node[above] {\texttt{a}};
  \draw (6.5,0) node[above] {\texttt{a}};
  \draw (-1,0) node[above] {\texttt{a}};
  \draw (-0.5,0) node[above] {\texttt{a}};
  \draw (7,0) node[above] {\texttt{b}};
  \draw (7.5,0) node[above] {\texttt{a}};
  \draw[xshift=-2cm] (-0.25,0.4) .. controls (0.5,1.2) and (1,1.2) .. (1.75,0.4);
  \filldraw[white] (-2.25,0) rectangle (-1.25,1.5);
  \draw[xshift=1.5cm] (-0.25,0.4) .. controls (0.5,1.2) and (1,1.2) .. (1.75,0.4);
  \draw[yshift=0.1cm] (-0.25,0) .. controls (0.5,-0.8) and (1,-0.8) .. (1.75,0);
  \draw[xshift=5cm] (-0.25,0.4) .. controls (0.5,1.2) and (1,1.2) .. (1.75,0.4);
  \draw[xshift=3cm,yshift=0.1cm] (-0.25,0) .. controls (0.5,-0.8) and (1,-0.8) .. (1.75,0);
  \draw[xshift=6.5cm,yshift=0.1cm] (-0.25,0) .. controls (0.5,-0.8) and (1,-0.8) .. (1.75,0);
  \filldraw[white,yshift=0.1cm] (7.75,0) rectangle (8.25,-1.5);
}
\end{tikzpicture}
  \caption{Seed of string \texttt{aaabaabaabaaabaaba}.}
  \label{fig:seed}
  \end{center}
  \end{figure}

  Recall that a word $u$ is a seed of $w$ if $u$ is a factor of $w$ and $w$ is a factor of some
  word $y$ covered by $u$, see Fig.~\ref{fig:seed}.
  The following lemma states that the set of all seeds of $w$ has a representation of $\Oh(n)$ size, where $n=|w|$.
  This representation enables, e.g., simple computation of all shortest seeds of the word.
  By a range on a edge of a suffix tree we mean a number of consecutive nodes on this edge
  (obviously at most one of these nodes is explicit).
  Let $w^R$ denote the reverse of the word $w$.
  
  \begin{lemma}[\cite{DBLP:journals/algorithmica/IliopoulosMP96,DBLP:conf/soda/KociumakaKRRW12}]\label{lem:seeds_representation}
    The set of all seeds of $w$ can be split into two disjoint classes.
    The seeds from one class form a single (possibly empty) range on each edge of the suffix tree of $w$,
    while the seeds from the other class form a range on each edge of the suffix tree of $w^R$.
  \end{lemma}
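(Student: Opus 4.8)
The plan is to combine the classical combinatorial characterization of seeds with a monotonicity argument run along the edges of the two suffix trees. First I would recall the characterization (from~\cite{DBLP:journals/algorithmica/IliopoulosMP96}): writing $\Occ(u,w)=\{p_1<\dots<p_k\}$ and $\ell=|u|$, a factor $u$ is a seed of $w$ if and only if \textbf{(a)} $p_{i+1}-p_i\le \ell$ for every $i$ (the occurrences cover the \emph{interior} $w[p_1\dd p_k+\ell-1]$ with no gap exceeding $\ell$), \textbf{(b)} some prefix of $w$ of length at least $L=p_1-1$ is a suffix of $u$ (one left overhang fills the uncovered prefix), and \textbf{(c)} some suffix of $w$ of length at least $R=n-p_k-\ell+1$ is a prefix of $u$ (one right overhang fills the uncovered suffix). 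I would also record the reversal symmetry: $u$ is a seed of $w$ iff $u^R$ is a seed of $w^R$, that condition (a) is invariant under reversal, and that condition (b) for $(u,w)$ is literally condition (c) for $(u^R,w^R)$ (a prefix of $w$ equal to a suffix of $u$ becomes a suffix of $w^R$ equal to a prefix of $u^R$). This symmetry is what makes the two suffix trees play interchangeable roles.

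Next I would fix a single edge of the suffix tree of $w$ and exploit that the occurrence set $\{p_1,\dots,p_k\}$ is constant along it, so the only free parameter is the length $\ell$, which ranges over the lengths of the nodes on the edge. Extending $u$ one character to the right shrinks the right deficiency $R$ and preserves every already-valid match ``a suffix of $w$ equals a prefix of $u$'' (a prefix of the old $u$ is still a prefix of the new $u$), so condition (c) holds for all $\ell$ above a threshold; likewise the largest gap $\max_i(p_{i+1}-p_i)$ is fixed, so (a) holds for all $\ell$ above a threshold. Hence the factors satisfying (a) and (c) --- those \emph{completable on the right} --- occupy the upper part of the edge, i.e.\ a (possibly empty) contiguous range. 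Note that condition (b), by contrast, constrains the \emph{suffixes} of $u$, which are destroyed when $u$ is extended rightward, so (b) is not monotone along edges of the suffix tree of $w$; but by the reversal symmetry the factors completable on the left occupy a contiguous range on each edge of the suffix tree of $w^R$, measured in terms of $u^R$.

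Finally I would assemble the two one-sided range families into the disjoint two-class representation of the genuine seeds, which are exactly the factors satisfying (a), (b) and (c) simultaneously. I would split the seeds according to which deficiency dominates: the class $L\le R$, where the right overhang is the \emph{binding} constraint, to be stored on the suffix tree of $w$; and the class $L>R$ to be stored on the suffix tree of $w^R$ (after reversal the roles of $L$ and $R$ swap, so this is again the dominant-deficiency side, and it reuses the monotonicity already established for the other tree). Along any fixed edge the test $L\le R$ is equivalent to $\ell\le n-p_k-p_1+2$, a single cut of the length interval, so each class meets each edge in a sub-interval of the relevant already-contiguous one-sided range --- provided that the \emph{subordinate} (smaller-deficiency) overhang condition does not fragment it.

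The main obstacle, and the technical heart I would import from~\cite{DBLP:journals/algorithmica/IliopoulosMP96,DBLP:conf/soda/KociumakaKRRW12}, is exactly this last point: proving that on the side where one deficiency dominates, the subordinate overhang condition is itself monotone in $\ell$, so that the set of true seeds stays contiguous on each edge. This rests on a Fine--Wilf-type argument, since (b) and (c) each assert that a short prefix (respectively suffix) of $w$ extends $u$ periodically with period $\ell$; when the associated deficiency is the smaller one, this periodicity, once present, persists as $\ell$ grows within the edge, and cannot be switched off and on again. Granting this, the two classes are disjoint by construction and cover all seeds, and each forms a single (possibly empty) range on every edge of its suffix tree, which is the assertion of the lemma. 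I would finally remark that the annotations of $\CST(w)$, in particular the arithmetic-progression structure of $\c$ from Lemma~\ref{lem:formula}, let one locate the relevant thresholds efficiently, but that the range structure itself is purely combinatorial and independent of $\CST(w)$.
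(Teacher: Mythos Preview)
The paper does not prove this lemma at all: it is stated with a citation to \cite{DBLP:journals/algorithmica/IliopoulosMP96,DBLP:conf/soda/KociumakaKRRW12} and immediately used as a black box, so there is no ``paper's own proof'' to compare your proposal against.

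Your sketch is a reasonable reconstruction of the argument in those sources. The three-condition characterization (interior cover, left overhang, right overhang), the reversal symmetry exchanging (b) and (c), the observation that the occurrence set is constant along an edge so that (a) and (c) become monotone thresholds in $\ell$, and the split by the dominant deficiency $L\lessgtr R$ are all in line with how the cited papers structure the proof. You also correctly isolate the one non-obvious step---that the \emph{subordinate} overhang condition does not fragment the range on the dominant side---and correctly flag it as the technical heart requiring a periodicity (Fine--Wilf) argument, which is indeed where the work lies in \cite{DBLP:conf/soda/KociumakaKRRW12}. Since you explicitly import that step rather than prove it, your proposal is really a roadmap rather than a self-contained proof; that is appropriate here given that the present paper treats the lemma as known.

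One minor remark: your closing sentence about the $\CST(w)$ annotations and Lemma~\ref{lem:formula} does not belong in a proof of this lemma. The range structure asserted here is purely combinatorial and predates the Cover Suffix Tree; the paper only brings $\CST(w)$ in later, in Theorem~\ref{thm:seeds}, to \emph{compute} the ranges efficiently once their existence is known.
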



  We will show that given $\CST(w)$ and $\CST(w^R)$ we can compute the representation
  of all seeds from Lemma~\ref{lem:seeds_representation} in $\Oh(n)$ time.
  Let us recall auxiliary notions of quasiseed and quasigap, see \cite{DBLP:conf/soda/KociumakaKRRW12}.

  By $\firstocc(u)$ and $\lastocc(u)$ let us denote $\min \Occ(u)$ and $\max \Occ(u)$, respectively.
  We say that $u$ is a \emph{complete cover} in $w$ if $u$ is a cover of
  the word $w[\firstocc(u),\lastocc(u)+|u|-1]$.
  The word $u$ is called a \textit{quasiseed} of $w$ if $u$ is a complete cover in $w$,
  $\firstocc(u) < |u|$ and $n+1 - \lastocc(u) < 2|u|$.
  Alternatively, $w$ can be decomposed into $w=xyz$, where $|x|,|z|<|u|$ and $u$ is a cover of $y$.

  All quasiseeds of $w$ lying on the same edge of the suffix tree with lower explicit endpoint $v$
  form a range with the lower explicit end of the range located at $v$.
  The length of the upper end of the range is denoted as $\textsf{quasigap}(v)$.
  If the range is empty, we set $\textsf{quasigap}(v)=\infty$.
  Thus a representation of all quasiseeds of a given word can be provided using only 
  the quasigaps of explicit nodes in the suffix tree.
  It is known that computation of quasiseeds is the hardest part of an algorithm computing seeds:

  \begin{lemma}[\cite{DBLP:journals/algorithmica/IliopoulosMP96,DBLP:conf/soda/KociumakaKRRW12}]\label{lem:quasigaps}
    Assume quasigaps of all explicit nodes of suffix trees of $w$ and $w^R$ are known.
    Then a representation of all seeds of $w$ from Lemma~\ref{lem:seeds_representation}
    can be found in $\Oh(n)$ time.
  \end{lemma}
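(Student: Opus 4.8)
The plan is to obtain all seeds from the quasiseeds, whose per-edge ranges are already encoded by the quasigaps. First I would recall the characterization behind Lemma~\ref{lem:seeds_representation}, established in \cite{DBLP:journals/algorithmica/IliopoulosMP96,DBLP:conf/soda/KociumakaKRRW12}: every seed is in particular a quasiseed, so that its two overhangs $w[1,\firstocc(u)-1]$ and $w[\lastocc(u)+|u|,n]$ are both shorter than $|u|$; moreover a quasiseed $u$ is a seed exactly when each overhang can be realised by a single overflowing occurrence of $u$, i.e.\ the left overhang is a suffix of $u$ and the right overhang is a prefix of $u$. The prefix condition is a downward condition in the suffix tree of $w$, while the suffix condition is a downward condition in the suffix tree of $w^R$; Lemma~\ref{lem:seeds_representation} records that, once these two viewpoints are reconciled, the seed set splits into the two announced disjoint classes, each laid out as ranges on one of the two trees.

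Given the quasigaps, I would then process every edge of the two suffix trees independently. Fix an edge of the suffix tree of $w$ with lower explicit endpoint $v$. All nodes on this edge share the occurrence set $\Occ(\hat v)$, hence a common $\firstocc$ and $\lastocc$ and thus fixed overhangs, and the quasiseeds on the edge are exactly the factors of length in $[\textsf{quasigap}(v),|v|]$ (empty when $\textsf{quasigap}(v)=\infty$). On top of this I would intersect the quasiseed range with the set of lengths for which the relevant overhang condition holds. Since the overhang is fixed along the edge while the candidate word grows by one letter per step, this condition reduces to comparing $|u|$ against a single threshold, so the intersection is again one contiguous range. The threshold is computed in $\Oh(1)$ per edge from data precomputed in $\Oh(n)$ time: the first and last occurrences already available at the explicit nodes, together with a border-type array recording, for each edge, the longest prefix of $w$ equal to a suffix of the edge word (and symmetrically on $w^R$). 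The same procedure is run on the suffix tree of $w^R$ for the other class.

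Taking the union over all edges of both trees yields, by the characterization above, exactly the set of seeds, presented as one range per edge of the suffix tree of $w$ and one range per edge of the suffix tree of $w^R$, which is precisely the representation of Lemma~\ref{lem:seeds_representation}. The total cost is $\Oh(n)$, since both trees together have $\Oh(n)$ edges and each is handled in constant time after linear preprocessing. I expect the main obstacle to be the combinatorial step hidden in the second paragraph: proving that the overhang condition really does carve each quasiseed range into a \emph{single} contiguous seed range (monotonicity along the edge), and that the left/right bookkeeping partitions the seeds into the two disjoint classes with neither omission nor double counting. This is the point at which I would rely on the detailed structural analysis of seeds in \cite{DBLP:conf/soda/KociumakaKRRW12}, which guarantees the clean range structure that makes the $\Oh(n)$ reduction from quasigaps to seeds possible.
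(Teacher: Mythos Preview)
The paper does not give a proof of this lemma: it is stated with citations to \cite{DBLP:journals/algorithmica/IliopoulosMP96,DBLP:conf/soda/KociumakaKRRW12} and then used as a black box in the proof of Theorem~\ref{thm:seeds}. There is therefore no ``paper's own proof'' to compare against; the paper's contribution in this section is the computation of the quasigaps from $\CST(w)$, not the reduction from quasigaps to seeds.

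Your sketch is a plausible outline of how the cited references proceed, and you are right to flag the monotonicity of the overhang condition along an edge as the real combinatorial content. One point to tighten: you assert that ``the overhang is fixed along the edge,'' but this holds only for the \emph{left} overhang $w[1\dd\firstocc(u)-1]$; the right overhang $w[\lastocc(u)+|u|\dd n]$ shrinks as $|u|$ increases along the edge, so the prefix condition is not a comparison against a single fixed string. This asymmetry is part of why the representation of Lemma~\ref{lem:seeds_representation} genuinely requires \emph{both} trees and a split into two classes, rather than a single-threshold argument on one tree. Since you ultimately defer that structural analysis to \cite{DBLP:conf/soda/KociumakaKRRW12}, your write-up is in the same spirit as the paper's own treatment, which simply cites the result.
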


  It turns out that the auxiliary data in $\CST(w)$ and $\CST(w^R)$ enable constant-time computation
  of quasigaps of explicit nodes.
  By Lemma~\ref{lem:quasigaps} this yields an $\Oh(n)$ time algorithm for computing a representation
  of all the seeds of $w$.
  This is stated formally in the following theorem.

  \begin{theorem}\label{thm:seeds}
    Assume that the Cover Suffix Tree of a word of length $n$ can be computed in $T(n)$ time.
    Given a word $w$ of length $n$, one can compute a representation of all seeds of $w$
    from Lemma~\ref{lem:seeds_representation} in $T(n)$ time.
    In particular, all the shortest seeds of $w$ can be computed within the same time complexity.
  \end{theorem}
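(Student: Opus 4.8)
The plan is to build $\CST(w)$ and $\CST(w^R)$, read off the quasigap of every explicit node of the two suffix trees in constant time each, and then invoke Lemma~\ref{lem:quasigaps}. Constructing the two trees costs $2T(n)$, and everything that follows is linear, so the total is $2T(n)+\Oh(n)=\Oh(T(n))$, which is the claimed bound (the constant factor is absorbed, as building a linear-size structure forces $T(n)=\Omega(n)$). The shortest seeds are then obtained by a single scan of the $\Oh(n)$-size seed representation returned by Lemma~\ref{lem:quasigaps}.

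The heart of the argument is to express the quasiseed test through the annotations $\c,\Delta$. Fix an explicit node $v$ of the suffix tree of $w$ and consider a node $u$ of length $j$ lying on the edge below $v$. All such $u$ share the same occurrence set, so $f:=\firstocc(u)$ and $l:=\lastocc(u)$ are constant along the edge; I would precompute $\firstocc$ and $\lastocc$ for all explicit nodes in $\Oh(n)$ total time by one bottom-up traversal, since they are the minimum and maximum leaf labels in the subtree. The positions covered by $u$ all lie inside $w[f,\,l+j-1]$, a window of length $l-f+j$, so $\Covered(u,w)\le l-f+j$, with equality exactly when $u$ is a complete cover. Hence $u$ is a complete cover iff $\c(u)=l-f+j$, a test evaluable in $\Oh(1)$ via Lemma~\ref{lem:if_CST}(\ref{two}). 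The two remaining quasiseed requirements, $\firstocc(u)<j$ and $n+1-l<2j$, are plain linear inequalities in $j$.

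The key structural point, which I expect to be the main obstacle, is that these three conditions cut out a single contiguous range of lengths, so that the quasiseeds below $v$ genuinely form a range whose upper end is $\textsf{quasigap}(v)$. By Lemma~\ref{lem:formula}, along each edge of $\CST(w)$ the map $j\mapsto\c(u)$ is affine with slope $\Delta(v)$, while $l-f+j$ has slope $1$. When $\Delta(v)=1$ these affine functions are parallel, so $\c(u)=l-f+j$ holds either throughout the edge or nowhere on it; when $\Delta(v)\ge 2$ the coverage grows strictly faster than the window, and since coverage can never exceed it, the equality can hold at most at the lowest endpoint of the edge. Either way the complete-cover lengths on a single $\CST$-edge form a contiguous block, and this block is the \emph{top} of the edge, because by Observation~\ref{obs:primitively_rooted_square} every length at which a consecutive pair of occurrences ceases to overlap (that is, a primitively rooted square appears) is marked by an extra node and hence coincides with a $\CST$-edge boundary. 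Intersecting this block with the two monotone inequalities $j>f$ and $2j>n+1-l$ yields, for each $\CST$-edge, a range of quasiseed lengths computable in $\Oh(1)$, exactly as the linear solve in Lemma~\ref{lem:if_CST}(\ref{one}).

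Finally I would assemble the quasigaps. The extra nodes subdivide each suffix-tree edge into several $\CST$-edges, but since the complete-cover property is monotone in $j$, the per-$\CST$-edge quasiseed ranges glue into a single range on each suffix-tree edge; one $\Oh(n)$ pass over all $\Oh(n)$ $\CST$-edges therefore produces $\textsf{quasigap}(v)$ for every explicit node $v$ of the suffix tree of $w$, and the analogous pass over $\CST(w^R)$ handles $w^R$. Feeding both quasigap tables into Lemma~\ref{lem:quasigaps} returns the representation of all seeds of $w$ from Lemma~\ref{lem:seeds_representation} in $\Oh(n)$ time, and scanning it reports the shortest seeds, completing the reduction within $\Oh(T(n))$ time.
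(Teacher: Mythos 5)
Your proposal is correct and takes essentially the same route as the paper: characterize complete covers by $\c(u)=\lastocc(u)-\firstocc(u)+|u|$, use Lemma~\ref{lem:formula} to show that along a $\CST$-edge this property extends above the deepest node exactly when $\Delta(v)=1$ (and is confined to that node when $\Delta(v)\ge 2$), intersect with the two monotone length inequalities to get per-edge quasiseed ranges, merge these across the non-branching extra nodes onto suffix-tree edges, and finish with Lemma~\ref{lem:quasigaps}. The only blemish is terminological: your statement that the complete-cover block is the ``top'' of the edge is consistent with your own (correct) preceding sentence only if ``top'' is read as the largest lengths, i.e.\ the end at the deepest node --- the block is anchored at the \emph{lower} explicit endpoint, which is exactly what makes $\textsf{quasigap}(v)$ the upper end of the glued range.
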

  \begin{proof}
    We show how to compute quasigaps for all explicit nodes of $\CST(w)$.
    The computation for $\CST(w^R)$ is symmetric.
    Note that $\CST(w)$ may contain more explicit nodes that the suffix tree of the word.
    In this case, the results from any maximal sequence of edges connected by non-branching explicit nodes
    in $\CST(w)$ need to be merged into a single range on the corresponding edge of the suffix tree.

    By the definition of $\c(v)$, an explicit node $v$ of $\CST(w)$ is a complete cover in $w$
    if the following condition holds:
    $$\c(v) = \lastocc(v) - \firstocc(v) + |v|.$$
    Thus for checking whether an explicit node $v$ of $\CST(w)$ is a quasiseed of $w$
    it suffices to check whether this condition and the following equalities hold:
    $$\firstocc(v) < |v|,\quad n+1 - \lastocc(v) < 2|v|.$$
    If $v$ is not a quasiseed of $w$, we have $\textsf{quasigap}(v)=\infty$, otherwise
    we can assume that $\textsf{quasigap}(v) \le |v|$.

    \begin{example}
      Consider the word $w$ from Fig.~\ref{fig:CST_example}, $n=15$.
      The word $\texttt{cacc}$ corresponds to an explicit node of $\CST(w)$; we denote it by $v$.
      We have $\c(v)=11$, $\firstocc(v)=4$, $\lastocc(v)=11$, and $\lastocc(v) - \firstocc(v) + |v|=11$.
      Therefore $\texttt{cacc}$ is a quasiseed of $w$, see also Fig.~\ref{fig:c_Delta}.
    \end{example}

    By Lemma~\ref{lem:formula}, the condition for any node on the edge ending at $v$ to be
    a complete cover in $w$ is very simple:
    $$\Delta(v) = 1.$$
    Assume this condition is satisfied and consider any implicit node $v'$ on this edge.
    Then $v'$ is a quasiseed if both inequalities:
    $$\firstocc(v) < k\quad\mbox{and}\quad n+1 - \lastocc(v) < 2k$$
    are satisfied.
    Thus in this case
    $$\textsf{quasigap}(v)=\max(\firstocc(v)+1,\,\lceil (n-\lastocc(v)+2)/2 \rceil,\,|\mathit{parent}(v)|+1).$$
  
    \begin{example}
      Consider the word $w$ from Fig.~\ref{fig:CST_example}.
      The word $\texttt{cccacc}$ corresponds to an explicit node of $\CST(w)$; we denote it by $v$.
      We have $\c(v)=10$, $\firstocc(v)=2$, $\lastocc(v)=6$, and $\lastocc(v) - \firstocc(v) + |v|=10$.
      Therefore $\texttt{cccacc}$ is a quasiseed of $w$. 
      Since $\Delta(v)=1$, $\textsf{quasigap}(v)$ could be smaller than 6.
      However, $\lceil (n-\lastocc(v)+2)/2 \rceil=6$ and the above formula yields $\textsf{quasigap}(v)=6$.
    \end{example}

    \noindent
    This concludes a complete set of rules for computing $\textsf{quasigap}(v)$ for explicit nodes of $\CST(w)$.
  \end{proof}

  \section{Conclusions}\label{sec:conclusions}
  We have presented an algorithm which constructs a data structure, called the
  {\em Cover Suffix Tree}, in $\Oh(n\log n)$ time and $\Oh(n)$ space.
  The Cover Suffix Tree has been developed in order to solve the
  \PC\ and \APC\ problem in $\Oh(n)$ and $\Oh(n\log n)$ time,
  respectively, but it also gives a well-structured description of the cover indices of all factors.
  Consequently, various questions related to partial covers can be answered efficiently.
  For example, with the Cover Suffix Tree one can solve in linear time a
  problem inverse to \PC:
  find a factor of length between $l$ and $r$ that maximizes the number of positions covered.
  Also a similar problem to \APC\ problem,
  to compute for all lengths $l=1,\ldots,n$ the maximum number of positions covered by a factor of length $l$,
  can be solved in $\Oh(n\log n)$ time.
  This solution was actually given implicitly in the proof of Lemma~\ref{lem:all_if_CST}.

  An interesting open problem is to reduce the construction time to $\Oh(n)$.
  This could be difficult, though, since by the results of Section~\ref{sec:by-products}
  this would yield alternative linear-time algorithms finding primitively rooted squares
  and computing seeds.
  The only known linear-time algorithms for these problems
  (see \cite{DBLP:journals/jcss/GusfieldS04,Crochemore2013,DBLP:conf/spire/CrochemoreIKRRW10} and \cite{DBLP:conf/soda/KociumakaKRRW12})
  are rather complex.

  \bibliographystyle{abbrv}
  \bibliography{partial_covers}

\begin{thebibliography}{10}

\bibitem{Apo93}
A.~Apostolico and A.~Ehrenfeucht.
\newblock Efficient detection of quasiperiodicities in strings.
\newblock {\em Theor. Comput. Sci.}, 119(2):247--265, 1993.

\bibitem{DBLP:journals/ipl/ApostolicoFI91}
A.~Apostolico, M.~Farach, and C.~S. Iliopoulos.
\newblock Optimal superprimitivity testing for strings.
\newblock {\em Inf. Process. Lett.}, 39(1):17--20, 1991.

\bibitem{DBLP:journals/algorithmica/ApostolicoP96}
A.~Apostolico and F.~P. Preparata.
\newblock Data structures and algorithms for the string statistics problem.
\newblock {\em Algorithmica}, 15(5):481--494, 1996.

\bibitem{DBLP:journals/ipl/Breslauer92}
D.~Breslauer.
\newblock An on-line string superprimitivity test.
\newblock {\em Inf. Process. Lett.}, 44(6):345--347, 1992.

\bibitem{DBLP:conf/icalp/BrodalLOP02}
G.~S. Brodal, R.~B. Lyngs{\o}, A.~{\"O}stlin, and C.~N.~S. Pedersen.
\newblock Solving the string statistics problem in time ${O}(n \log n)$.
\newblock In P.~Widmayer, F.~T. Ruiz, R.~M. Bueno, M.~Hennessy, S.~Eidenbenz,
  and R.~Conejo, editors, {\em ICALP}, volume 2380 of {\em Lecture Notes in
  Computer Science}, pages 728--739. Springer, 2002.

\bibitem{DBLP:conf/cpm/BrodalP00}
G.~S. Brodal and C.~N.~S. Pedersen.
\newblock Finding maximal quasiperiodicities in strings.
\newblock In R.~Giancarlo and D.~Sankoff, editors, {\em CPM}, volume 1848 of
  {\em Lecture Notes in Computer Science}, pages 397--411. Springer, 2000.

\bibitem{DBLP:journals/jacm/BrownT79}
M.~R. Brown and R.~E. Tarjan.
\newblock A fast merging algorithm.
\newblock {\em J. ACM}, 26(2):211--226, 1979.

\bibitem{AlgorithmsOnStrings}
M.~Crochemore, C.~Hancart, and T.~Lecroq.
\newblock {\em Algorithms on Strings}.
\newblock Cambridge University Press, 2007.

\bibitem{DBLP:journals/tcs/CrochemoreIR09}
M.~Crochemore, L.~Ilie, and W.~Rytter.
\newblock Repetitions in strings: Algorithms and combinatorics.
\newblock {\em Theor. Comput. Sci.}, 410(50):5227--5235, 2009.

\bibitem{Crochemore2013}
M.~Crochemore, C.~Iliopoulos, M.~Kubica, J.~Radoszewski, W.~Rytter, and
  T.~Wale\'n.
\newblock Extracting powers and periods in a word from its runs structure.
\newblock {\em Theoretical Computer Science}, doi: 10.1016/j.tcs.2013.11.018,
  2013.

\bibitem{DBLP:conf/spire/CrochemoreIKRRW10}
M.~Crochemore, C.~S. Iliopoulos, M.~Kubica, J.~Radoszewski, W.~Rytter, and
  T.~Wale\'n.
\newblock Extracting powers and periods in a string from its runs structure.
\newblock In E.~Ch{\'a}vez and S.~Lonardi, editors, {\em SPIRE}, volume 6393 of
  {\em Lecture Notes in Computer Science}, pages 258--269. Springer, 2010.

\bibitem{Jewels}
M.~Crochemore and W.~Rytter.
\newblock {\em Jewels of Stringology}.
\newblock World Scientific, 2003.

\bibitem{DBLP:conf/focs/Farach97}
M.~Farach.
\newblock Optimal suffix tree construction with large alphabets.
\newblock In {\em FOCS}, pages 137--143, 1997.

\bibitem{Flouri2013102}
T.~Flouri, C.~S. Iliopoulos, T.~Kociumaka, S.~P. Pissis, S.~J. Puglisi,
  W.~Smyth, and W.~Tyczy\'nski.
\newblock Enhanced string covering.
\newblock {\em Theoretical Computer Science}, 506(0):102 -- 114, 2013.

\bibitem{DBLP:journals/jct/FraenkelS98}
A.~S. Fraenkel and J.~Simpson.
\newblock How many squares can a string contain?
\newblock {\em J. Comb. Theory, Ser. A}, 82(1):112--120, 1998.

\bibitem{DBLP:journals/jcss/GusfieldS04}
D.~Gusfield and J.~Stoye.
\newblock Linear time algorithms for finding and representing all the tandem
  repeats in a string.
\newblock {\em J. Comput. Syst. Sci.}, 69(4):525--546, 2004.

\bibitem{DBLP:journals/ipl/Hershberger89}
J.~Hershberger.
\newblock Finding the upper envelope of n line segments in {O}(n log n) time.
\newblock {\em Inf. Process. Lett.}, 33(4):169--174, 1989.

\bibitem{DBLP:journals/algorithmica/IliopoulosMP96}
C.~S. Iliopoulos, D.~W.~G. Moore, and K.~Park.
\newblock Covering a string.
\newblock {\em Algorithmica}, 16(3):288--297, 1996.

\bibitem{DBLP:conf/soda/KociumakaKRRW12}
T.~Kociumaka, M.~Kubica, J.~Radoszewski, W.~Rytter, and T.~Wale\'n.
\newblock A linear time algorithm for seeds computation.
\newblock In Y.~Rabani, editor, {\em SODA}, pages 1095--1112. SIAM, 2012.

\bibitem{DBLP:conf/focs/KolpakovK99}
R.~M. Kolpakov and G.~Kucherov.
\newblock Finding maximal repetitions in a word in linear time.
\newblock In {\em FOCS}, pages 596--604. IEEE Computer Society, 1999.

\bibitem{DBLP:conf/cpm/KucherovNS12}
G.~Kucherov, Y.~Nekrich, and T.~A. Starikovskaya.
\newblock Cross-document pattern matching.
\newblock In J.~K{\"a}rkk{\"a}inen and J.~Stoye, editors, {\em CPM}, volume
  7354 of {\em Lecture Notes in Computer Science}, pages 196--207. Springer,
  2012.

\bibitem{DBLP:journals/algorithmica/LiS02}
Y.~Li and W.~F. Smyth.
\newblock Computing the cover array in linear time.
\newblock {\em Algorithmica}, 32(1):95--106, 2002.

\bibitem{DBLP:journals/ipl/MooreS94}
D.~Moore and W.~F. Smyth.
\newblock An optimal algorithm to compute all the covers of a string.
\newblock {\em Inf. Process. Lett.}, 50(5):239--246, 1994.

\bibitem{SimParkKimLee}
J.~S. Sim, K.~Park, S.~Kim, and J.~Lee.
\newblock Finding approximate covers of strings.
\newblock {\em Journal of Korea Information Science Society}, 29(1):16--21,
  2002.

\bibitem{DBLP:journals/algorithmica/Ukkonen95}
E.~Ukkonen.
\newblock On-line construction of suffix trees.
\newblock {\em Algorithmica}, 14(3):249--260, 1995.

\end{thebibliography}
\end{document}